\documentclass[journal,twoside,web]{ieeecolor}
\usepackage{generic}
\usepackage{cite}
\usepackage{amsmath,amssymb,amsfonts}
\usepackage{algorithmic}
\usepackage{graphicx}
\usepackage{textcomp}
\usepackage{epstopdf}
\usepackage{mathrsfs}
\usepackage{pifont}

\newcommand{\beq}{\begin{equation}}
\newcommand{\eeq}{\end{equation}}
\newcommand{\bea}{\begin{eqnarray}}
\newcommand{\eea}{\end{eqnarray}}

\usepackage{enumerate}

\usepackage{float}

\usepackage{tikz}
\usetikzlibrary{matrix,arrows,decorations.pathmorphing}
\usepackage{verbatim}
\usepackage{mathtools}
\usepackage{caption}
\usepackage{subcaption}
\usetikzlibrary{arrows}
\usepackage{lipsum}
\usepackage{tabularx}
\newtheorem{assumption}{Assumption}
\newtheorem{cor}{Corollary}
\usepackage{amsmath,esint} 
\usepackage{cuted}
\stripsep -8pt

\DeclareMathOperator*{\sgn}{\mathsf{sgn}}
\DeclareMathOperator*{\Dg}{\mathsf{diag}}

\def\BibTeX{{\rm B\kern-.05em{\sc i\kern-.025em b}\kern-.08em
    T\kern-.1667em\lower.7ex\hbox{E}\kern-.125emX}}
\markboth{\journalname, VOL. XX, NO. XX, XXXX 2022}
{Author \MakeLowercase{\textit{et al.}}: Preparation of Papers for IEEE TRANSACTIONS and JOURNALS (May 2022)}
\begin{document}
\title{Higher-order Laplacian dynamics on hypergraphs with cooperative and antagonistic interactions}
\author{Shaoxuan Cui, Chencheng Zhang, 
        Bin Jiang,  \IEEEmembership{Fellow, IEEE,}, Hildeberto Jardón-Kojakhmetov\\
        and~Ming Cao, \IEEEmembership{Fellow, IEEE}
\thanks{Cui was supported by the China Scholarship Council. This work of Zhang and Jiang was supported in part by the National Natural Science Foundation of China under Grants 62303218 and 62403241. Zhang was supported in addition by the Postdoctoral Fellowship Program of CPSF under Grant GZB20240975 and Jiangsu Provincial Outstanding Postdoctoral Program. The work of Cao was supported in part by the Netherlands Organization for Scientific Research (NWO-Vici-19902). }
\thanks{S. Cui and H. Jard\'on-Kojakhmetov are with the Bernoulli Institute for Mathematics, Computer Science and Artificial Intelligence, University of Groningen, Groningen, 9747 AG Netherlands.   with the Bernoulli Institute for Mathematics, Computer Science and Artificial Intelligence, University of Groningen, Groningen, 9747 AG Netherlands  (E-mail:  s.cui@rug.nl; h.jardon.kojakhmetov@rug.nl)}
\thanks{C. Zhang and B. Jiang are with the College of Automation Engineering, Nanjing University of Aeronautics and Astronautics, Nanjing, 211106, China (e-mail: zhangchencheng@nuaa.edu.cn;  binjiang@nuaa.edu.cn). 
(Corresponding author: Chencheng Zhang)}
\thanks{M. Cao is with the Institute of Engineering and Technology (ENTEG), University of Groningen, Groningen, 9747 AG Netherlands  (e-mail: m.cao@rug.nl).
}}

\maketitle

\newtheorem{remark}{Remark}
\newtheorem{lemma}{Lemma}
\newtheorem{thm}{Theorem}
\newtheorem{example}{Example}
\newtheorem{definition}{Definition}
\newtheorem{prop}{Proposition}

\begin{abstract}
Laplacian dynamics on a signless graph characterize a class of linear interactions, where pairwise cooperative interactions between all agents lead to the convergence to a common state. On a structurally balanced signed graph, the agents converge to values of the same magnitude but opposite signs (bipartite consensus), as illustrated by the well-known Altafini model. These interactions have been modeled using traditional graphs, where the relationships between agents are always pairwise. In comparison, higher-order networks (such as hypergraphs), offer the possibility to capture more complex, group-wise interactions among agents. This raises a natural question: can collective behavior be analyzed by using hypergraphs? The answer is affirmative. In this paper, higher-order Laplacian dynamics on signless hypergraphs are first introduced and various collective convergence behaviors are investigated, in the framework of homogeneous and non-homogeneous polynomial systems. Furthermore, by employing gauge transformations and leveraging tensor similarities, we extend these dynamics to signed hypergraphs, drawing parallels to the Altafini model. Moreover, we explore non-polynomial interaction functions within this framework. The theoretical results are demonstrated through several numerical examples.
\end{abstract}

\begin{IEEEkeywords}
Higher-order networks, Hypergraphs, Collective behavior, Bipartite consensus, Structural balance
\end{IEEEkeywords}

\section{Introduction}
\label{sec:introduction}
\IEEEPARstart{F}{locking}
The study of Laplacian dynamics has long served as a fundamental behavior framework for networks or multi-agent systems, especially in understanding how interactions between agents lead to converging behaviors\cite{turner1957collective}.  
In addition, Laplacian dynamics have attracted the attention of researchers among different disciplines due to their potential application in  biology \cite{reynolds1987flocks,vicsek1995novel}, social sciences \cite{rainer2002opinion,vasca2021practical} and control engineering \cite{lin2004local,zhao2020consensus,zhao2021further,saber2003consensus}. 
Particularly relevant for this work are, for example, \cite{lin2015graph, zhou2009convergence, xia2017analysis, lin2013leader, tian2023laplacian} that have focused on the convergence of systems through cooperative interactions.


Networks are generally represented using either signless or signed graphs.
In a signless graph where the weight of each edge is nonnegative, Laplacian dynamics typically describe cooperative interactions, leading to a state where all agents converge to a common identical value. However, in a signed graph, where edges can have both positive and negative weights, Laplacian dynamics result in more complex collective behaviors \cite{shi2019dynamics, pan2016laplacian, meng2018uniform, wu2023uniform}.
Researchers have taken antagonistic interactions into account and showed convergence behavior on a signed network. One of the most well-known models in this context is Altafini's model \cite{altafini2012consensus}, which shows that all agents reach bipartite consensus; to be more precise, all agents converge to a value the same in modulus but different in signs if and only if the underlying signed network is structurally balanced. Otherwise, all agents converge to zero. Developing upon Altafini's model,  \cite{liu2017exponential,xia2015structural,meng2016behaviors,proskurnikov2014consensus}, have further extended it, in discrete time and continuous time, respectively, into the time-varying digraph case. While Altafini's model adopts the definition of Laplacian $L=[\ell_{ik}]$ as:
\begin{equation*}
\ell_{i k}= \begin{cases}\sum_{j \in \operatorname{adj}(i)}\left|a_{i j}\right| & k=i \\ -a_{i k} & k \neq i\end{cases};
\end{equation*}
some other researchers let $\ell_{ii}=\sum_{j \in \operatorname{adj}(i)}a_{i j}$ instead. It has been shown that the latter definition leads to more complicated system behaviors (convergence, clustering, divergence) \cite{pan2016laplacian}.


So far, all the aforementioned works are based on a conventional network and all interactions between agents are pairwise. However, in reality, the interactions are not always pairwise. For example, in social networks, people communicate with each other not only via one-on-one private chat but also via group chat \cite{cui2023general}. In ecology, the influence of one species on another is usually correlated to a third species \cite{letten2019mechanistic,cui2024analysis}. All these facts lead to a need for higher-order interactions (HOIs), which can be captured by higher-order networks \cite{bick2021higher} (simplicial complexes or hypergraphs). 

Recently, more and more works concentrate on how the consensus behavior emerges on a hypergraph and how to design a consensus protocol over a signless higher-order network \cite{sahasrabuddhe2021modelling,neuhauser2021consensus,neuhauser2022consensus,neuhauser2020multibody,neuhauser2021opinion,shang2022system}. All such works adopt a model in the form of
\begin{equation}\label{eq:l0}
\dot{x}_i=\sum_{j, k=1}^N A_{i j k} \ s\left(\left|x_j-x_k\right|\right)\left[\left(x_j-x_i\right)+\left(x_k-x_i\right)\right],
\end{equation}
where $A_{ijk}$ are entries of the adjacency tensor of a hypergraph.
However, the nonlinearity of the system mainly comes from the scaling function $s\left(\left|x_j-x_k\right|\right)$ rather than the higher-order network, and so, the effect of higher-order interactions has not been fully revealed. In essence, the dynamics of \eqref{eq:l0} are not governed by a higher-order Laplacian structure, which will be properly defined and introduced later in this paper (section \ref{sec:laplacian}). Furthermore, there is a gap in understanding the potential (bipartite) collective behavior on signed higher-order networks. In other words, what types of higher-order structures can lead to bipartite convergence is still unclear.

In the context of conventional graphs, linear dynamics are typically associated with the Laplacian matrix, where the system takes the form $\Dot{x}=-Lx$. 
Regarding higher-order interactions, the concept of a Laplacian tensor of a uniform hypergraph is proposed by, for example, \cite{hu2015laplacian,qi2013h}. They show that a Laplacian tensor shares many useful properties similar to a Laplacian matrix. Thus, the natural way to investigate the collective behavior and convergence property of a higher-order network on hypergraphs is to consider the system in the higher-order Laplacian form of $\Dot{x}=-Lx^{k-1}$, where $L$ is a Laplacian tensor of order $k$ (we will introduce the concept of hypergraph Laplacian later in section \ref{sec:laplacian} of this paper).

The contributions of our paper are summarized as follows. First, we investigate a class of higher-order networks on hypergraphs where the corresponding tensor is an irreducible Metzler tensor with a zero Perron-eigenvalue. This result is then utilized to analyze the convergence property of the system with the help of a Laplacian tensor on a signless hypergraph. 
The concept of a Laplacian tensor originates from the study of undirected, unweighted uniform hypergraphs. We extend the definition of an undirected, unweighted Laplacian tensor to a weighted directed hypergraph. By exploiting the properties of similar tensors and a gauge transformation, we finally investigate the collective behavior of higher-order Laplacian dynamics over a signed hypergraph and further utilize it to design a collective protocol on a signed hypergraph that ensures bipartite convergence. For the first time, we extend the concept of structural balance into the hypergraph context. Moreover, we leverage our results and consider also non-polynomial interaction functions for higher-order Laplacian dynamics. Finally, numerical examples are given to highlight the theoretical contributions.

\emph{Notation:} $\mathbb{R}$ denotes the set of real numbers, $\mathbb{R}_{+}$ denotes the set of nonnegative real numbers and $\mathbb{R}_{++}$ denotes the set of positive real numbers. Given a square matrix $M \in \mathbb{R}^{n \times n}$, $\rho(M)$ denotes the spectral radius of $M$, which is the largest absolute value of the eigenvalues of $M$. For a matrix $M \in \mathbb{R}^{n \times r}$ and a vector $a \in \mathbb{R}^n$, $M_{ij}$ and $a_{i}$ denote the element in the $i$th row and $j$th column and the $i$th entry, respectively. For any two vectors $a, b \in \mathbb{R}^n$, $a > (<) b$ indicates that $a_i >(<) b_i$, for all $i=1,\ldots,n$; and $a \geq (\leq) b$ means that $a_i \geq (\leq) b_i$, for all $i=1,\ldots,n$. These component-wise comparisons are also valid for matrices or tensors with the same dimension. The vector $\mathbf{1}$ ($\mathbf{0}$) represents the column vector, matrix, or tensor of all ones (zeros) with appropriate dimensions. In the following section, we introduce the preliminaries and further notations regarding tensors. 
 
\section{Preliminaries on tensors and hypergraphs}

A tensor $T \in \mathbb{R}^{n_1 \times n_2 \times \cdots \times n_k}$ is a multidimensional array. The order of a tensor is $k$, which represents the number of its dimensions, and each dimension $n_i$, where $i = 1, \ldots, k$, is a mode of the tensor. If all the modes of a tensor have the same size, we call it a cubical tensor, denoted as $T \in \mathbb{R}^{n \times n \times \cdots \times n}$. 
A cubical tensor $T$ is supersymmetric if $T_{j_1 j_2 \ldots j_k}$ is invariant under any permutation of its indices.

The identity tensor of order $k$ dimension $n$, is defined by
\begin{equation*}
   \delta_{i_1 \cdots i_k} = \begin{cases}
   1 & \text{if } i_1 = i_2 = \cdots = i_k \\
   0 & \text{otherwise}
   \end{cases}.
\end{equation*}

Throughout the rest of this paper, the term 'tensor' will always refer to a cubical tensor unless explicitly stated otherwise. For a tensor, we consider the following homogeneous polynomial equation:


\begin{equation}\label{eq:eigenproblem}
A x^{k-1} = \lambda x^{[k-1]},
\end{equation}
where $A x^{k-1}$ and $x^{[k-1]}$ are vectors, and their $i$-th components are defined as follows:

\begin{align*}
\left(A x^{k-1}\right)_i & = \sum_{i_2, \ldots, i_k=1}^n A_{i, i_2 \cdots i_k} x_{i_2} \cdots x_{i_k}, \\
\left(x^{[k-1]}\right)_i & = x_i^{k-1}.
\end{align*}

If there exists a real number $\lambda$ and a nonzero real vector $x$ that are solutions of \eqref{eq:eigenproblem}, then $\lambda$ is referred to as the $H$-eigenvalue of $A$, and $x$ is called the $H$-eigenvector of $A$ associated with $\lambda$. We further classify $\lambda$ and $x$ as $H^{++}$-eigenvalue and $H^{++}$-eigenvector, respectively, if $x$ is strictly greater than the zero vector. Throughout this paper, we use eigenvalues and eigenvectors as well as $H$-eigenvalues and $H$-eigenvectors interchangeably.


It is straightforward to see that if $Ax^{k-1}=k_1 x^{[k-1]}$ and $Bx^{k-1}=k_2 x^{[k-1]}$, then $Ax^{k-1}+Bx^{k-1}=(A+B)x^{k-1}=k_1 x^{[k-1]}+k_2 x^{[k-1]}=(k_1+k_2)x^{[k-1]}$. We will utilize these properties later in the technical part. 

The spectral radius of a tensor $A$ is defined as $\rho(A)=\max \{|\lambda|: \lambda \text { is an eigenvalue of } A\}$.
A tensor ${C}=\left(c_{i_1} \ldots c_{i_k}\right)$ of order $k$ and dimension $n$ is termed reducible if there exists a nonempty proper index subset $I \subset\{1, \ldots, n\}$ such that
$$
c_{i_1 \cdots i_k}=0 \quad \forall i_1 \in I, \quad \forall i_2, \ldots, i_k \notin I.
$$
If ${C}$ is not reducible, it is referred to as irreducible. A tensor with all non-negative entries is called a non-negative tensor.

Let ${A}$ and ${B}$ be two tensors of order $k$ and dimension $n$. The tensors ${A}$ and ${B}$ are diagonally similar if there exists an invertible diagonal matrix $D$ such that ${B}=D^{-(k-1)} {A} D$, where the tensor-matrix multiplication is introduced in \cite{shao2013general}.

\begin{lemma} [\cite{shao2013general}]\label{lem:eig}
    Suppose that the two tensors ${A}$ and ${B}$ are diagonally similar, namely ${B}=D^{-(k-1)} {A} D$ for some invertible diagonal matrix $D$. Then $x$ is an eigenvector of ${B}$ corresponding to the eigenvalue $\lambda$ if and only if $y=D x$ is an eigenvector of ${A}$ corresponding to the same eigenvalue $\lambda$.
\end{lemma}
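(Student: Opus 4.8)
The plan is to reduce the stated equivalence to a single reversible entrywise computation, exploiting the fact that a diagonal similarity acts multiplicatively on the components of the tensor. First I would make the similarity explicit at the level of entries. Writing $D = \diag(d_1, \ldots, d_n)$ and using the general tensor product of \cite{shao2013general}, the relation $B = D^{-(k-1)} A D$ amounts to
\[
B_{i_1, i_2 \cdots i_k} = d_{i_1}^{-(k-1)}\, d_{i_2} d_{i_3} \cdots d_{i_k}\, A_{i_1, i_2 \cdots i_k},
\]
which for $k = 2$ recovers the familiar matrix identity $B_{ij} = d_i^{-1} A_{ij} d_j$. This explicit form is the engine of the proof: it lets me transport each diagonal factor $d_{i_j}$ into the contraction.

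Next I would compute the $i$-th component of $B x^{k-1}$ and isolate the factor attached to the first mode:
\begin{align*}
\left(B x^{k-1}\right)_i &= \sum_{i_2, \ldots, i_k=1}^n d_i^{-(k-1)}\, d_{i_2} \cdots d_{i_k}\, A_{i, i_2 \cdots i_k}\, x_{i_2} \cdots x_{i_k} \\
&= d_i^{-(k-1)} \sum_{i_2, \ldots, i_k=1}^n A_{i, i_2 \cdots i_k}\, (d_{i_2} x_{i_2}) \cdots (d_{i_k} x_{i_k}) \\
&= d_i^{-(k-1)} \left(A (Dx)^{k-1}\right)_i.
\end{align*}
The crucial observation is that absorbing each $d_{i_j}$ into the corresponding $x_{i_j}$ replaces the argument $x$ by $Dx = y$, precisely because the map $x \mapsto A x^{k-1}$ is homogeneous of degree $k-1$ in the contracted modes $2, \ldots, k$.

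Finally I would translate the eigenvalue condition. The equation $B x^{k-1} = \lambda x^{[k-1]}$ reads componentwise as $d_i^{-(k-1)} \left(A (Dx)^{k-1}\right)_i = \lambda x_i^{k-1}$ for every $i$; multiplying through by $d_i^{k-1}$ yields $\left(A (Dx)^{k-1}\right)_i = \lambda (d_i x_i)^{k-1} = \lambda y_i^{k-1}$, that is, $A y^{k-1} = \lambda y^{[k-1]}$ with $y = Dx$. Here the power $-(k-1)$ on the first mode is exactly what is needed for the $d_i^{k-1}$ factors to recombine into $y^{[k-1]}$ and leave $\lambda$ unchanged. Since $D$ is invertible, every $d_i \neq 0$, so $x \mapsto Dx$ is a bijection sending nonzero vectors to nonzero vectors, and each step above is an equality rather than a one-way implication; reading the chain forwards and backwards therefore establishes both directions of the ``if and only if'' at once.

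I expect no deep obstacle, as the argument is essentially bookkeeping. The one place demanding care is fixing the correct entrywise form of $D^{-(k-1)} A D$ from the general product, since the exponents on the diagonal entries must be matched to the homogeneity degree $k-1$; once that form is pinned down, the homogeneity of the tensor action makes the remaining manipulations routine.
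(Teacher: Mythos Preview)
Your argument is correct: the entrywise formula for $B = D^{-(k-1)} A D$ matches what the paper itself records later (with $P$ in place of $D$), and once that is in hand the contraction computation and the reversibility via invertibility of $D$ are exactly as you describe. There is nothing to compare against, however, because the paper does not supply its own proof of this lemma; it is quoted from \cite{shao2013general} without argument, so your proposal stands on its own as a complete justification.
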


We briefly introduce here the results regarding the Perron– Frobenius Theorem of an irreducible Metzler tensor \cite{cui2024metzler}.
The diagonal entries of a tensor are the entries with the same indices, such as $A_{iiiiii}$. Entries with different indices are referred to as off-diagonal entries. In a manner similar to the definition of a Metzler matrix, a Metzler tensor is defined as a tensor whose off-diagonal entries are non-negative \cite{cui2024metzler}. It is straightforward that any Metzler tensor $A$ can be represented as $A = B - s\mathcal{I}$, where $s$ is a scalar, and $B$ is a non-negative tensor.


\begin{lemma}[\cite{cui2024metzler}]
A Metzler tensor $A=B-s\mathcal{I}$ always has an eigenvalue that is real and equal to $\lambda(A)=\lambda(B)-s$, $\lambda(A),\lambda(B)$ is an $H$-eigenvalue of $A,B$ respectively.
\end{lemma}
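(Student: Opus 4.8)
The plan is to produce the claimed eigenpair of $A$ explicitly, by transporting the Perron pair of its nonnegative part $B$ through the diagonal shift. The only genuinely external ingredient is the Perron--Frobenius theorem for nonnegative tensors; everything else is a short algebraic verification built on the additivity property $(A+B)x^{k-1}=Ax^{k-1}+Bx^{k-1}$ recorded earlier in the excerpt.

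First, I would apply the Perron--Frobenius theorem for nonnegative tensors to $B$: since $B$ is a nonnegative tensor, it possesses a real $H$-eigenvalue $\lambda(B)=\rho(B)\ge 0$ together with an associated nonnegative $H$-eigenvector $x\ge \mathbf{0}$, so that $Bx^{k-1}=\lambda(B)\,x^{[k-1]}$. This step supplies both the reality of $\lambda(B)$ and a concrete eigenvector to work with.

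Second, I would record the action of the identity tensor on the eigen-equation. By its definition, the entries of $\mathcal{I}$ are $\delta_{i_1\cdots i_k}$, so
\begin{equation*}
(\mathcal{I}x^{k-1})_i=\sum_{i_2,\dots,i_k=1}^n \delta_{i\,i_2\cdots i_k}\,x_{i_2}\cdots x_{i_k}=x_i^{k-1},
\end{equation*}
hence $\mathcal{I}x^{k-1}=x^{[k-1]}$ for every vector $x$. The crucial observation is that the diagonal shift $-s\mathcal{I}$ acts as scalar multiplication by $-s$ precisely on the right-hand side $x^{[k-1]}$ of the homogeneous eigenproblem \eqref{eq:eigenproblem}, so it cannot disturb the eigenvector.

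Finally, I would combine the two facts through additivity:
\begin{equation*}
Ax^{k-1}=(B-s\mathcal{I})x^{k-1}=Bx^{k-1}-s\,\mathcal{I}x^{k-1}=\lambda(B)\,x^{[k-1]}-s\,x^{[k-1]}=(\lambda(B)-s)\,x^{[k-1]}.
\end{equation*}
Thus $\lambda(A):=\lambda(B)-s$ is a real $H$-eigenvalue of $A$, attained by the same nonnegative eigenvector $x$, which is exactly the assertion. I do not anticipate a serious obstacle here: the entire content is that the shift preserves the $H$-eigenstructure of $B$. The only place where care is genuinely needed is in citing the correct form of the Perron--Frobenius theorem to guarantee that $\lambda(B)$ is real and is attained by a genuine, nonzero eigenvector; once that is granted, the rest reduces to the displayed one-line computation.
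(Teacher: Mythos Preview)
Your argument is correct: shifting by $-s\mathcal{I}$ leaves the eigenvector intact and subtracts $s$ from the $H$-eigenvalue, so the Perron pair of $B$ transports to an $H$-eigenpair of $A$. Note, however, that the paper does not supply its own proof of this lemma at all; it is imported verbatim from the cited reference \cite{cui2024metzler}, so there is nothing in the present manuscript to compare your approach against. Your write-up is essentially the standard justification one would expect to find in that source, and the one caveat you yourself flag---that the weak Perron--Frobenius theorem for (possibly reducible) nonnegative tensors is needed to guarantee a real $\lambda(B)$ with a genuine nonnegative eigenvector---is indeed the only nontrivial input.
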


Next, we show the Perron-Frobnius theorem for an irreducible Metzler tensor.
\begin{lemma}[\cite{cui2024metzler}]\label{thm:perron}
If $B$ is an irreducible nonnegative tensor of order $k$ dimension $n$ with $\rho(B)$ the spectral radius of $B$, then $A=B-s\mathcal{I}$ is an irreducible Metzler tensor of order $k$ dimension $n$ and $\lambda(A)=\rho(B)-s$ is the Perron-$H$-eigenvalue of $A$; furthermore the following hold:
\begin{itemize}
\item[(1)] There is a strictly positive eigenvector $x >\mathbf{0}$ corresponding to $\lambda(A)$.
\item[(2)] If $\lambda$ is an eigenvalue of $A$ with nonnegative eigenvector, then $\lambda=\lambda(A)$. Moreover, the nonnegative eigenvector is unique up to a multiplicative constant.
\end{itemize}
\end{lemma}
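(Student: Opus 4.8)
The plan is to deduce the entire statement from the Perron--Frobenius theorem for irreducible \emph{nonnegative} tensors by observing that the shift $A=B-s\mathcal{I}$ leaves eigenvectors unchanged and merely translates eigenvalues. The starting point is the algebraic identity $\mathcal{I}x^{k-1}=x^{[k-1]}$, which holds because $(\mathcal{I}x^{k-1})_i=\sum_{i_2,\ldots,i_k=1}^n \delta_{i\,i_2\cdots i_k}x_{i_2}\cdots x_{i_k}=x_i^{k-1}$. Consequently, for any vector $x$,
\begin{equation*}
Ax^{k-1}=Bx^{k-1}-s\,\mathcal{I}x^{k-1}=Bx^{k-1}-s\,x^{[k-1]},
\end{equation*}
so that $(\lambda,x)$ is an eigenpair of $B$ if and only if $(\lambda-s,x)$ is an eigenpair of $A$, \emph{with the same eigenvector} $x$. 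This correspondence (already the substance of the preceding lemma) is the device that transports every spectral feature of $B$ to $A$.

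Next I would check that neither irreducibility nor the Metzler property is lost under the shift. Since $\mathcal{I}$ is supported only on the diagonal $i_1=\cdots=i_k$, subtracting $s\mathcal{I}$ modifies no off-diagonal entry of $B$; in particular, for every nonempty proper $I\subset\{1,\ldots,n\}$ the entries $A_{i_1\cdots i_k}$ with $i_1\in I$ and $i_2,\ldots,i_k\notin I$ coincide with those of $B$. Hence $A$ is irreducible precisely when $B$ is, and $A$ is Metzler because only diagonal entries are decreased. This justifies applying the nonnegative theory to $B$ and reading the conclusions off for $A$.

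Then I would invoke the Perron--Frobenius theorem for the irreducible nonnegative tensor $B$: its spectral radius $\rho(B)$ is an $H$-eigenvalue, it admits a strictly positive eigenvector $x>\mathbf{0}$, this is the unique nonnegative eigenvector up to scaling, and any nonnegative eigenvector necessarily corresponds to $\rho(B)$. Transporting these facts through the eigenpair correspondence yields at once that $\lambda(A)=\rho(B)-s$ carries the strictly positive eigenvector $x$, establishing (1); that any nonnegative eigenvector of $A$ must belong to $\lambda(A)$ and is unique up to a multiplicative constant, establishing (2); and that $\lambda(A)$ inherits the status of Perron--$H$-eigenvalue from the maximality of $\rho(B)$ among the eigenvalues of $B$.

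The main obstacle is not the shift bookkeeping, which is routine, but the underlying nonnegative-tensor Perron--Frobenius theorem itself, were one to prove it from scratch rather than cite it. Unlike the matrix case, the map $x\mapsto Bx^{k-1}$ is homogeneous of degree $k-1$ and nonlinear, so the linear-algebraic tools (resolvents, powers of a matrix) are unavailable; the existence of a positive eigenvector must instead be obtained by a fixed-point or degree-theoretic argument on the standard simplex, and uniqueness by a cone/monotonicity argument adapted to homogeneity. Since the present lemma cites this result, I would treat it as a black box and spend the care instead on verifying the correspondence and on interpreting ``unique up to a multiplicative constant'' in the projective sense proper to homogeneous eigenvectors.
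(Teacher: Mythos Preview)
The paper does not supply a proof for this lemma: it is stated with the citation \cite{cui2024metzler} and taken as a known result, so there is no in-paper argument to compare against. Your proposal is correct and is exactly the standard route one would expect the cited reference to take---reduce to the Perron--Frobenius theorem for irreducible nonnegative tensors via the eigenpair correspondence $(\lambda,x)\leftrightarrow(\lambda-s,x)$, after noting that subtracting $s\mathcal{I}$ touches only diagonal entries and hence preserves both irreducibility and the Metzler structure.
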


In the following paragraph, we provide definitions related to hypergraphs.

The notion of a hypergraph is formally introduced in \cite{gallo1993directed}. In this context, we leverage a collection of tensors to describe the weight information associated with a hypergraph. A weighted and directed hypergraph is defined as a triple \(\mathbf{H} = (\mathcal{V}, \mathcal{E}, A)\). Here, \(\mathcal{V}\) represents a set of vertices, and \(\mathcal{E} = \{E_1, E_2, \ldots, E_n\}\) denotes the set of hyperedges. Each hyperedge can be represented as an ordered pair \(E = (\mathcal{X}, \mathcal{Y})\), consisting of disjoint subsets of vertices, where \(\mathcal{X}\) corresponds to the tail of the hyperedge, and \(\mathcal{Y}\) corresponds to the head. 
In the context of modeling, a directed hyperedge often symbolizes the collective influence of a group of agents on a single agent. Consequently, it is sufficient to work with hyperedges having a single tail \cite{cui2024metzler}. So, from now on, we assume that each hyperedge possesses only one tail but can have one or more heads. This framework is in line with the approach presented in \cite{xie2016spectral} and offers the advantage that a directed uniform hypergraph can be efficiently represented using an adjacency tensor.
To represent the weights associated with hyperedges, we introduce the set of tensors \(A = \{A_2, A_3, \ldots\}\). Here, \(A_2 = [A_{ij}]\) captures the weights of second-order hyperedges, \(A_3 = [A_{ijk}]\) represents the weights of third-order hyperedges, and so on. For example, \(A_{ijkl}\) denotes the weight of the hyperedge where \(i\) serves as the tail, and \(j, k, l\) are the heads. To simplify our exposition, we use the weight notation (e.g., \(A_{\bullet}\)) to refer to the respective hyperedge. If all hyperedges in the hypergraph consist of an equal number of nodes, we classify the hypergraph as uniform. A $k$-uniform hypergraph is a hypergraph, where all hyperedges consist of $k$ nodes. For comprehensive discussions on this, please refer to \cite{chen2021controllability} for insights into undirected uniform hypergraphs and to \cite{xie2016spectral} for a detailed treatment of directed uniform hypergraphs. Conventionally, an undirected $k$-uniform hypergraph can be described using a supersymmetric adjacency tensor of order $k$. A directed $k$-uniform hypergraph can be represented by using an adjacency tensor of order $k$. A uniform undirected hypergraph is connected if its adjacency tensor is irreducible. A uniform directed hypergraph is strongly connected if its adjacency tensor is irreducible. A non-uniform (un)directed hypergraph can be regarded as a composition of several uniform (un)directed hypergraphs. Thus, we can use a set of tensors (of different orders) to represent a non-uniform hypergraph.

\section{Positive Metzler systems with a zero Perron-eigenvalue}
Before we go further into higher-order Laplacian dynamics and consensus-convergence dynamics on hypergraphs, we introduce a series of results regarding a positive Metzler-tensor-based system. Later, we leverage these results to study Laplacian dynamics and design a convergence protocol. We consider a positive Metzler system \cite{cui2024metzler} of $n$ nodes or agents. The dynamical system is given by
\begin{equation}\label{eq:sys1}
    \dot{x}=Ax^{k-1},
\end{equation}
where $A$ is an irreducible Metzler tensor of order $k$ dimension $n$, and $x\in\mathbb{R}^n$ is the state variable. Component-wise, \eqref{eq:sys1} reads as
\begin{equation}
    \dot{x}_i=\sum_{i_2, \ldots, i_k=1}^n A_{i, i_2 \cdots i_m} x_{i_2} \cdots x_{i_k}.
\end{equation}

Firstly, we show that the system \eqref{eq:sys1} is a positive system.
\begin{lemma}
    The positive orthant $\mathbb{R}_+^n$ 
    is positively invariant with respect to the flow of \eqref{eq:sys1}.
\end{lemma}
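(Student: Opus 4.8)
The plan is to verify positive invariance through the sub-tangentiality (Nagumo) condition. Since the right-hand side of \eqref{eq:sys1} is a homogeneous polynomial vector field, it is smooth and hence locally Lipschitz, so solutions exist and are unique locally, and it suffices to check that on each boundary face of $\mathbb{R}_+^n$ the vector field does not point strictly outward. The boundary of the positive orthant is the union of the faces $\mathcal{F}_i = \{x \in \mathbb{R}_+^n : x_i = 0\}$, each with inward normal $e_i$; the relevant condition is therefore that $\dot{x}_i \geq 0$ whenever $x \in \mathbb{R}_+^n$ and $x_i = 0$.

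First I would fix an index $i$ and a point $x \in \mathbb{R}_+^n$ with $x_i = 0$, and split the sum defining $\dot{x}_i$ according to whether the multi-index $(i_2,\ldots,i_k)$ contains the value $i$. Every term in which at least one of $i_2,\ldots,i_k$ equals $i$ carries a factor $x_i = 0$ and therefore vanishes; in particular the diagonal contribution $A_{ii\cdots i}\,x_i^{k-1}$ vanishes. What remains is the sum over multi-indices with $i \notin \{i_2,\ldots,i_k\}$.

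The key observation is that each surviving coefficient $A_{i, i_2 \cdots i_k}$ is an off-diagonal entry of $A$ (its first index is $i$ while none of the remaining indices is), so by the Metzler property it is nonnegative; and the accompanying monomial $x_{i_2}\cdots x_{i_k}$ is a product of nonnegative coordinates, hence nonnegative. Consequently $\dot{x}_i$ is a sum of nonnegative terms, giving $\dot{x}_i \geq 0$ on $\mathcal{F}_i$, which is precisely the Nagumo condition. Since this holds for every $i$, the flow of \eqref{eq:sys1} cannot cross any face outward, and $\mathbb{R}_+^n$ is positively invariant.

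The main obstacle here is bookkeeping rather than conceptual depth: one must correctly isolate the diagonal term and confirm that, after discarding every monomial containing the factor $x_i$, all remaining coefficients are genuinely off-diagonal and thus covered by the Metzler sign assumption. Note that irreducibility of $A$ plays no role in this argument. A secondary point worth stating explicitly is that the local Lipschitz continuity of the polynomial field is what legitimizes the use of the tangency criterion.
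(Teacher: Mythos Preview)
Your argument is correct and follows exactly the same route as the paper's proof: check that $\dot{x}_i \geq 0$ whenever $x\in\mathbb{R}_+^n$ and $x_i=0$, by discarding all monomials containing the factor $x_i$ and using the Metzler sign condition on the surviving off-diagonal coefficients. The paper compresses this into a single line, whereas you have added the (welcome) explicit mention of the Nagumo criterion and the local Lipschitz justification.
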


\begin{proof}
    If $x_i=0$, then $\dot{x}_i=\sum_{i_2, \ldots, i_m\neq i} A_{i, i_2 \cdots i_k} x_{i_2} \cdots x_{i_k}\geq 0.$ 
\end{proof}

Next, we present the following main result of our paper.


\begin{prop}\label{thm:centermanifold}
 Consider the system \eqref{eq:sys1} on a $k$-uniform hypergraph. If the tensor $A$ has a zero Perron-eigenvalue, then \eqref{eq:sys1} has a line of equilibria spanned by the corresponding Perr-eigenvector $x^*$; that is, the line of equilibria can be written as $\alpha x^*$, $\alpha\geq0$. Such a line of equilibria is globally attractive. Furthermore, for any positive initial condition, the corresponding solution of \eqref{eq:sys1} is bounded and does not converge to the origin.
\end{prop}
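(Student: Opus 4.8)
The plan is to reduce the statement to an asymptotic consensus problem by a strictly positive diagonal rescaling, and then to run a monotone (order-preserving) comparison argument in the rescaled coordinates. Since $A$ has a zero Perron-eigenvalue, Lemma~\ref{thm:perron} provides a strictly positive eigenvector $x^* > \mathbf{0}$ with $A(x^*)^{k-1} = \mathbf{0}$. The line of equilibria is then immediate from homogeneity of degree $k-1$: for any $\alpha \ge 0$, $A(\alpha x^*)^{k-1} = \alpha^{k-1}A(x^*)^{k-1} = \mathbf{0}$, so $\dot{x} = \mathbf{0}$ at every point of $\{\alpha x^* : \alpha \ge 0\}$.

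For the convergence I would introduce $z_i = x_i/x_i^*$, which is well defined and nonnegative on the invariant orthant $\mathbb{R}^n_+$. A direct computation gives $\dot{z}_i = \sum_{i_2,\dots,i_k} \tilde{A}_{i,i_2\cdots i_k}\, z_{i_2}\cdots z_{i_k}$ with $\tilde{A}_{i,i_2\cdots i_k} = (x_i^*)^{-1} A_{i,i_2\cdots i_k}\, x^*_{i_2}\cdots x^*_{i_k}$. The rescaled tensor $\tilde{A}$ inherits three properties that drive the whole argument: its off-diagonal entries are nonnegative (it is Metzler), it has the same zero-pattern as $A$ and is therefore irreducible, and the identity $A(x^*)^{k-1}=\mathbf{0}$ turns into the zero-row-sum condition $\sum_{i_2,\dots,i_k}\tilde{A}_{i,i_2\cdots i_k}=0$ for every $i$. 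Under the rescaling the equilibria $\alpha x^*$ become the consensus states $z = \alpha\mathbf{1}$, so it suffices to prove that $z(t)$ converges to a common value.

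I would then track $M(t)=\max_i z_i(t)$ and $m(t)=\min_i z_i(t)$. Using the zero-row-sum property, at an index $i$ attaining the maximum one can rewrite $\dot{z}_i = \sum_{(i_2,\dots,i_k)\ne(i,\dots,i)} \tilde{A}_{i,i_2\cdots i_k}\bigl(z_{i_2}\cdots z_{i_k}-z_i^{k-1}\bigr)$; since the coefficients are nonnegative and every factor obeys $0\le z_{i_j}\le M$, each bracket is nonpositive, so the upper Dini derivative of $M$ is $\le 0$. Symmetrically $m$ is nondecreasing. Hence $M$ and $m$ are monotone and bounded, the trajectory remains in the compact box $m(0)\mathbf{1}\le z\le M(0)\mathbf{1}$ (which simultaneously gives global existence and the boundedness claim), and $M(t)\downarrow M_\infty$, $m(t)\uparrow m_\infty$ with $m_\infty\le M_\infty$.

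The limit set is identified by a LaSalle argument combined with irreducibility. On the nonempty, compact, invariant $\omega$-limit set one has $M\equiv M_\infty$, so along any limiting trajectory $\dot{z}_i=0$ whenever $z_i=M_\infty$; the nonpositive-terms decomposition then forces $z_{i_2}\cdots z_{i_k}=M_\infty^{k-1}$, i.e.\ $z_{i_j}=M_\infty$ at all heads of every hyperedge with $\tilde{A}_{i,i_2\cdots i_k}>0$. Setting $I=\{i: z_i=M_\infty\}$, this says precisely that $\tilde{A}_{i,i_2\cdots i_k}=0$ whenever $i\in I$ and $i_2,\dots,i_k\notin I$; irreducibility rules out a proper nonempty such $I$, so $I=\{1,\dots,n\}$ and every $\omega$-limit point equals $M_\infty\mathbf{1}$. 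Thus $m_\infty=M_\infty$ and $z(t)\to M_\infty\mathbf{1}$, i.e.\ $x(t)\to M_\infty x^*$. For a strictly positive initial condition $m(0)>0$, monotonicity yields $M_\infty\ge m(0)>0$, so the limit $M_\infty x^*$ is nonzero. The hardest step is this final one — establishing genuine asymptotic consensus $m_\infty=M_\infty$ rather than a persistent gap — which is exactly where irreducibility is indispensable and where the nonsmoothness of $\max$ and $\min$ forces the Dini-derivative/LaSalle formalism instead of naive differentiation.
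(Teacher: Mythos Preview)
Your argument is correct and, for the convergence and boundedness claims, essentially parallels the paper's: the paper's Lyapunov function $V_m=\max_i(x_i/x_i^*)^{k-1}$ is exactly your $M^{k-1}$ written in the original coordinates, and both proofs invoke a LaSalle/invariance principle together with irreducibility to force all ratios $x_i/x_i^*$ to coincide on the limit set. The one substantive difference is how non-convergence to the origin is obtained. You get it for free by also tracking $m(t)=\min_i z_i(t)$ and noting that the same zero-row-sum decomposition makes $m$ nondecreasing, so $M_\infty\ge m_\infty\ge m(0)>0$ for any strictly positive initial condition. The paper instead introduces the shifted error variable $y=x-\alpha x^*$, checks that the resulting system is again positive (so $x(t)\ge\alpha x^*$ is forward invariant), and then lets $\alpha$ be small enough to sit below the given positive initial condition. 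Your min-function argument is more direct, avoids the extra change of coordinates, and dovetails with the boundedness proof; the paper's shifted-system argument has the mild advantage of being reusable for other comparison statements but is otherwise heavier for this particular conclusion.
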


\begin{proof}
     Suppose that the tensor $A$ has a zero Perron-eigenvalue, and let $x^*$ be the corresponding positive eigenvector. Then, according to equation \eqref{eq:eigenproblem}, we have $A(x^*)^{k-1}=0$. This shows that $x^*$ is a positive equilibrium of system \eqref{eq:sys1}. Now, let $\alpha\geq0$ be a constant. According to \eqref{eq:eigenproblem}, $\alpha x^*$ is also a positive eigenvector corresponding to the zero Perron-eigenvalue. Notice that the origin is always an equilibrium. Since the constant $\alpha\geq 0$ can be arbitrary, $\alpha x^*$ forms a line of equilibria.

     Next, we show that this line of equilibria is globally attractive. Define the Lyapunov-like function $V_m= \max_i \left(\frac{x_i}{x^*_i}\right)^{k-1}$. Let $m=\arg \max_i\left(\frac{x_i}{x^*_i}\right)^{k-1}$. We can see that $V_m\geq 0$. Moreover, $V_m= 0$ if and only if $x=\mathbf{0}$, and $V_m$ is radially unbounded. In addition, the function $V_m$ is Lipschitz continuous in a bounded set because it is a polynomial function.
     
    Next, for any $i$, we have
    \begin{equation}\label{eq:ineq}
        x_i\leq \max_i \left(\frac{x_i}{x^*_i}\right)^{\frac{k-1}{k-1}}x^*_i=V_m^{\frac{1}{k-1}}x^*_i.
    \end{equation}
    This inequality \eqref{eq:ineq} holds as a strict equality only when $\frac{x_i}{x^*_i}$ is $\max_j\left(\frac{x_j}{x^*_j}\right)$.
    
    Let $T_i=(k-1) x_i^{k-2}$ for $i=1,\ldots,n$. Then we get 
    {\small\begin{equation}
    \begin{split}
        \dot{V}_m &= \frac{T_m\dot{x}_m}{(x_m^*)^{k-1}}=\frac{T_m}{(x_m^*)^{k-1}}\left(  \sum_{i_2, \ldots, i_k=1}^n A_{m, i_2 \cdots i_k} x_{i_2} \cdots x_{i_k} \right)\\
        &= \frac{T_m}{(x_m^*)^{k-1}}\Big(  \sum_{i_2, \ldots, i_k\neq m} A_{m, i_2 \cdots i_k} x_{i_2} \cdots x_{i_k} \\
        &+ A_{m,m,\cdots,m}\quad x_m^{k-1}\Big)\\
        & \leq T_m\left( \sum_{i_2, \ldots, i_k\neq m} A_{m, i_2 \cdots i_k} V_m^{\frac{1}{k-1}} x^*_{i_2} \cdots V_m^{\frac{1}{k-1}} x^*_{i_k} \frac{1}{(x_m^*)^{k-1}} \right.\\  
        &+  A_{m,m,\cdots,m}\quad V_m \left.\right)\\
        &= T_mV_m \left( \sum_{i_2, \ldots, i_k\neq m}A_{m, i_2 \cdots i_k} \frac{x^*_{i_2}\cdots x^*_{i_k}}{(x_m^*)^{k-1}}+ A_{m,\cdots ,m}\right)\\
        &= T_mV_m  \lambda(A)=0
    \end{split}
\end{equation}}

Since $\dot{V}_m \leq 0$, then $0\leq V_m \leq V_m(x(0))$. Thus, the solution of the system \eqref{eq:sys1} will converge to the largest invariant set of $M=\{x|\dot{V}_m(x)=0\}$ from lemma \ref{lem:inv} (the index of $m$ may swap). Recall that the inequality \eqref{eq:ineq} holds as an equality when $x_i=x_m$. Then, by further considering that $A$ is irreducible, $\dot{V}_m(x)=0$ implies $\frac{x_i}{x_m}$ is $\max_j\left(\frac{x_j}{x^*_j}\right)$ for all $i$. For any $i,j$, it holds $\frac{x_i}{x^*_i}=\frac{x_j}{x^*_j}$. Therefore, the solution of \eqref{eq:sys1} converges to $\alpha x^*$ with $\alpha \geq 0$.

Since $x_i\leq V_m^{\frac{1}{k-1}}x^*_i$ and $V_m$ is non-increasing, $x_i$ is bounded by $\sup(V_m)^{\frac{1}{k-1}} x^*_i$, which is the last statement.

Now, we show that the solution of the system \eqref{eq:sys1} does not converge to the origin from any positive initial condition. Firstly, we derive an error dynamic by a change of coordinate $y=x-x^*$: $\dot{y}=A(y+x^*)^{k-1}$. Componentwise, it is $\dot{y}_i=\sum_{i_2, \ldots, i_k=1}^n A_{i, i_2 \cdots i_m} (y_{i_2}+x_{i_2}^*) \cdots (y_{i_k}+x_{i_k}^*)$. We continue the proof by showing that the error dynamic is a positive system.

    Clearly, if $y_i=0$ for an arbitrary $i$, $\dot{y_i}=\sum_{i_2, \ldots, i_k=1} A_{i, i_2 \cdots i_m} (y_{i_2}+x_{i_2}^*) \cdots (y_{i_k}+x_{i_k}^*)$. Consider the only negative term is $A_{ii\cdots i}(x_i^*)^{k-1}$. Considering that $\sum_{i_2, \ldots, i_k=1}^n A_{i, i_2 \cdots i_m} x^*_{i_2} \cdots x^*_{i_k}=0$ and $A$ is a Metzler tensor, all the rest terms are non-negative. 

    Thus, $\dot{y_i}\geq 0$. The error dynamic is a positive system. We know that $\alpha x^*$ is also an equilibrium. Thus, from a positive initial condition $x\geq \alpha x^*$, the solution of the system \eqref{eq:sys1} remains non-smaller than $x^*$. Since we can choose $\alpha$ arbitrarily small, one completes the proof. 
\end{proof}

\begin{remark}
    The paper \cite{cui2024metzler} provides the analytical results regarding system \eqref{eq:sys1} but doesn't cover the case when the tensor $A$ has a zero eigenvalue. Our result of Proposition \ref{thm:centermanifold} provides further details under such a case.
\end{remark}

Next, we consider a more complicated scenario. Consider a Metzler positive system on a general non-uniform hypergraph:
\begin{equation}\label{eq:sysnon}
    \dot{x}=A_{k-1} x^{k-1}+A_{k-2} x^{k-2}+\cdots +A_{1} x,
\end{equation}
where all the tensors $A_{k-1},\cdots,A_1$ are irreducible Metzler tensors.


\begin{thm}\label{prop:c2}
Consider the system \eqref{eq:sysnon} and suppose all tensors have a common Perron-eigenvector associated with the zero Perron-eigenvalue. The system has a line of equilibria. The line of equilibria is globally attractive in the positive orthant and the solution of the system \eqref{eq:sysnon} doesn't converge to the origin from any positive initial condition. If the initial condition is finite, the solution remains bounded.
\end{thm}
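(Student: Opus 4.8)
The plan is to mirror the structure of Proposition~\ref{thm:centermanifold}, but with a Lyapunov-like function adapted to the \emph{mixed degrees} appearing in \eqref{eq:sysnon}. First I would record the line of equilibria: since every $A_j$ shares the positive Perron-eigenvector $x^*$ with zero Perron-eigenvalue, Lemma~\ref{thm:perron} gives $A_j (x^*)^{j}=\mathbf{0}$ for each $j$, so for any $\alpha\geq 0$ one has $\dot x=\sum_j A_j(\alpha x^*)^{j}=\sum_j \alpha^{j}A_j(x^*)^{j}=\mathbf{0}$. Hence $\{\alpha x^*:\alpha\geq 0\}$ is a line of equilibria, and, by the same face-wise argument used for \eqref{eq:sys1} applied termwise (on $x_i=0$ the only possibly-negative, diagonal, contribution vanishes while the remaining Metzler entries are nonnegative), the positive orthant is forward invariant.

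The central step is the choice $V=\max_i (x_i/x_i^*)$ rather than the degree-$(k-1)$ function used in Proposition~\ref{thm:centermanifold}; the latter does not generalize because the summands of \eqref{eq:sysnon} carry different homogeneity degrees and the single cancellation $T_m V_m\lambda(A)$ no longer applies. With $m=\arg\max_i(x_i/x_i^*)$ and $w_m=x_m/x_m^*$, I would bound each tensor's contribution \emph{separately}: using $x_\ell\leq w_m x_\ell^*$ for the nonnegative off-diagonal entries and the exact identity $x_m=w_m x_m^*$ for the diagonal entry, each order-$(j{+}1)$ term obeys $(A_j x^{j})_m\leq w_m^{\,j}\,(A_j(x^*)^{j})_m=0$. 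Summing over $j$ gives $\dot x_m\leq 0$, so the upper Dini derivative satisfies $D^+V=\dot x_m/x_m^*\leq 0$. This already yields boundedness, since $x_i(t)\leq V(x(0))\,x_i^*$ for finite initial data, and sets up LaSalle's principle via Lemma~\ref{lem:inv}.

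For global attractivity I would analyze the largest invariant set in $\{D^+V=0\}$. Because each $A_j$-contribution is individually nonpositive, $\dot x_m=0$ forces every one of them to vanish; for a fixed irreducible $A_j$, equality in the off-diagonal bound requires (in the positive orthant) that every head index of each nonzero entry $(A_j)_{m,i_2\cdots i_{j+1}}$ attain the maximal ratio, i.e.\ lie in the active set $S=\{i:x_i/x_i^*=V\}$. Thus $(A_j)_{m,\cdots}=0$ whenever a head leaves $S$, which is exactly the reducibility condition for $I=S$; irreducibility then forces $S=\{1,\dots,n\}$, so $x_i/x_i^*$ is constant in $i$ and the trajectory converges to $\alpha x^*$. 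Finally, to rule out convergence to the origin I would show each box $\{x\geq\alpha x^*\}$ is forward invariant: on the face $x_i=\alpha x_i^*$ the same termwise estimate run in reverse gives $\dot x_i\geq\sum_j\alpha^{j}(A_j(x^*)^{j})_i=0$. For any positive initial condition one then picks $\alpha\leq\min_i x_i(0)/x_i^*>0$, whence $x(t)\geq\alpha x^*>0$ for all $t$.

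The main obstacle is precisely this degree mismatch: the delicate point is verifying that the single scalar function $V=\max_i x_i/x_i^*$ controls all orders simultaneously, i.e.\ that the bound $(A_j x^{j})_m\leq w_m^{j}(A_j(x^*)^{j})_m$ holds uniformly in $j$ and that the per-order terms decouple, so that $\dot x_m=0$ implies each order vanishes independently. Care is also needed at the boundary of the orthant, where zero components could spuriously saturate a product; this is handled by the invariance of $\{x\geq\alpha x^*\}$, which keeps trajectories from positive initial data strictly interior, so the equality analysis underlying the irreducibility step remains valid.
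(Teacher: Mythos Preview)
Your argument is correct, and in fact slightly cleaner than the paper's, but it is not true that the degree-$(k-1)$ function from Proposition~\ref{thm:centermanifold} ``does not generalize.'' The paper keeps $V_m=\max_i(x_i/x_i^*)^{k-1}$ and simply bounds the order-$(j{+}1)$ contribution by $V_m^{\,j/(k-1)}\bigl(A_j(x^*)^{j}\bigr)_m=V_m^{\,j/(k-1)}\lambda(A_j)(x_m^*)^{j}=0$; the different homogeneity degrees are absorbed into different fractional powers of $V_m$, and each order cancels on its own because every $\lambda(A_j)=0$. So the per-order decoupling that you identify as the key mechanism is present in both proofs; you just package it with the degree-one function $V=\max_i x_i/x_i^*$, which spares you the chain-rule prefactor $T_m=(k-1)x_m^{k-2}$ and makes the termwise inequality $(A_j x^{j})_m\le w_m^{\,j}(A_j(x^*)^{j})_m$ completely transparent. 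Your treatment of the equality case (propagating the active set via irreducibility of each $A_j$) and of non-convergence to the origin (forward invariance of the boxes $\{x\ge\alpha x^*\}$) is more explicit than the paper's, which at that point just says ``analog to the proof of Proposition~\ref{thm:centermanifold}.''
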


\begin{proof}
    Suppose that all tensors $A_i$ have a zero Perron-eigenvalue, and let $x^*$ be the corresponding positive eigenvector. Then, according to equation \eqref{eq:eigenproblem}, we have $A_i(x^*)^{k-1}=0$. This shows that $\sum_{i=1}^{k-1} A_i(x^*)^{k-1}=0$ and thus $x^*$ is a positive equilibrium of system \eqref{eq:sysnon}. Now, let $\alpha>0$ be a constant. According to the equation \eqref{eq:eigenproblem}, $\alpha x^*$ is also a positive eigenvector corresponding to the zero Perron-eigenvalue. Since the constant $\alpha\geq 0$ can be arbitrary, thus $\alpha x^*$ forms a line of equilibria.

    Define a Lyapunov function as $V_m= \max_i \left(\frac{x_i}{x^*_i}\right)^{k-1}$. Let $i=\arg \max_k\left(\frac{x_k}{x^*_k}\right)^{k-1}$. Let $T_i=(k-1) x_i^{k-2}$.

    We now obtain
    \begin{equation}
    \begin{split}
        \dot{V}_i &= \frac{T_i\dot{x}_i}{(x^*_i)^{k-1}}=\frac{T_i}{(x^*_i)^{k-1}}\left(  \sum_{i_2, \ldots, i_k=1}^n (A_{k-1})_{ i, i_2 \cdots i_k} x_{i_2} \cdots x_{i_k} \right.\\
        &+\left. \sum_{i_2, \ldots, i_k=1}^n (A_{k-2})_{ i, i_2 \cdots i_{k-1}} x_{i_2} \cdots x_{i_{k-1}} +\cdots\right)\\
        &\leq \frac{T_i}{(x^*_i)^{k-1}} \left(  \sum_{i_2, \ldots, i_k=1}^n (A_{k-1})_{ i, i_2 \cdots i_k} x^*_{i_2} \cdots x^*_{i_k} V_m \right.\\
        &+\left. \sum_{i_2, \ldots, i_k=1}^n (A_{k-2})_{ i, i_2 \cdots i_{k-1}} x^*_{i_2} \cdots x^*_{i_{k-1}} V_m^{\frac{k-2}{k-1}}+\cdots\right)\\
        &= \frac{T_i}{(x^*_i)^{k-1}} \left(\sum_{i=1}^{k-1} V_m^{\frac{i}{k-1}} \lambda(A_{i})(x^*_i)^{i}\right)=0.
    \end{split}
\end{equation}

The rest of the proof is analog to the proof of Proposition \ref{thm:centermanifold}. 
\end{proof}

\section{Laplacian tensors, higher-order Laplacian dynamics, and consensus dynamics on a signless hypergraph}\label{sec:laplacian}
In this section, we propose higher-order Laplacian dynamics by using Laplacian tensors. Then, we further utilize the results from the previous sections to study higher-order Laplacian dynamics.

The convergence problem on a signless $k$-uniform hypergraph can be formulated as the following. Consider the system of integrators:
\begin{equation}\label{eq:integrator}
    \dot{x}=u, \; x\in \mathbb{R}^n, \;u\in \mathbb{R}^n.
\end{equation}
The goal of the convergence problem is to design a distributed feedback control law $u_i= f(x_i,x_{i_1},\cdots, x_{i_{t},\cdots })$, where $x_{i_1},\cdots, x_{i_{t}},\cdots$ are in the hyperedge that contains $i$ ($i$'s neighbor on a hypergraph), so that the system converges to an identical value, i.e. $x_i=x_j=a$ for any $i$ and $j$, where $a$ is a constant. The distributed feedback control law is based on the node itself and its neighbors in the sense of a hypergraph.

When $k=2$, the hypergraph becomes a graph, and the corresponding convergence dynamics can be designed as Laplacian dynamics:
\begin{equation*}
    \dot{x}=-L x,
\end{equation*}
where $L$ is a Laplacian matrix of a graph.

Inspired by this fact, we propose here a higher-order Laplacian dynamic, and we further show that it can be used as the convergence protocol on a signless $k$-uniform hypergraph under some appropriate assumptions. The higher-order Laplacian dynamic reads as 
\begin{equation}\label{eq:consensus}
    \dot{x}=-L x^{k-1},
\end{equation}
where $L$ is a Laplacian tensor of a uniform hypergraph. The higher-order terms $-L x^{k-1}$ capture how agents interact with each other inside a group of $k$ agents (a hyperedge with $k$ agents). There are multiple definitions of the Laplacian tensor. We will introduce them with details later in this section. When $k=2$, all such definitions reduce to the classical Laplacian dynamics on a conventional graph. The interaction function $-L x^{k-1}$ is in a multiplicative form component-wise. Note that the multiplicative form is very suitable to describe the higher-order interactions among agents. For example, the probability of two events that happen simultaneously
is the product of their probabilities \cite{chen2021controllability,cui2023general}. In ecology, the multiplicative form represents the influence of one species on another correlated to a third species \cite{letten2019mechanistic,cui2024analysis}. In addition, the higher order Laplacian dynamic \eqref{eq:consensus} is suitable for example for opinion dynamics. Recall that the componentwise representation of the model \eqref{eq:consensus} is 
\begin{equation*}
    \dot{x}_i=\sum_{i_2, \ldots, i_k=1}^n -L_{i, i_2 \cdots i_m} x_{i_2} \cdots x_{i_k}.
\end{equation*}
The product $x_{i_2} \cdots x_{i_k}$  can be understood as either a joint influence of $i_2,\cdots,i_k$ on $i$'s opinion or the influence of $i_k$ on $i$'s opinion accompanied with some indirect effects due to $i_2,\cdots, i_{k-1}$. At the end, we want to emphasize that the product $L x^{k-1}$ is a vector, which is suitable for the modeling of a vector field. If the power is not $k-1$, the product will be a matrix or a tensor and is potentially not suitable for the modeling purpose. 

In \cite{qi2013h}, the following definition of an adjacency tensor for an undirected uniform hypergraph and a Laplacian tensor is proposed:

\begin{definition}[\cite{qi2013h}]\label{def:1}
   The adjacency tensor ${A}={A}(\mathbf{H})$ of $\mathbf{H}$
   , is a $k$ th order $n$-dimensional symmetric tensor, with ${A}=\left(A_{i_1 i_2 \cdots i_k}\right)$, where $A_{i_1 i_2 \cdots i_k}=\frac{1}{(k-1) !}$ if $\left(i_1, i_2, \ldots, i_k\right) \in E$, and 0 otherwise. For $i \in V$, its degree $d(i)$ is defined as $d(i)=\left|\left\{e_p: i \in e_p \in E\right\}\right|$. We assume that every vertex has at least one edge. Thus, $d(i)>0$ for all $i$. The degree tensor ${D}={D}(\mathbf{H})$ of $\mathbf{H}$, is a $k$-th order $n$-dimensional diagonal tensor, with its $i$-th diagonal entry as $d(i)$. The Laplacian tensor is simply defined by $L=D-A$. 
\end{definition}

The Laplacian tensor as defined by Definition \ref{def:1} has the following properties:
\begin{lemma}[Theorem 4 \cite{qi2013h}]\label{lem:propl1}
     Assume that $k \geq 3$. Zero is the unique $H^{++}$-eigenvalue of $L$ with $H$-eigenvector $\mathbf{1}$, and is the smallest $H$-eigenvalue of $L$.
\end{lemma}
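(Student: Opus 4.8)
The plan is to establish the three assertions in turn: that $\mathbf{1}$ is an $H$-eigenvector for eigenvalue zero, that zero is the \emph{only} eigenvalue admitting a strictly positive ($H^{++}$) eigenvector, and that zero is the smallest $H$-eigenvalue. First I would verify directly that $L\mathbf{1}^{k-1}=\mathbf{0}$. Writing $(L\mathbf{1}^{k-1})_i = d(i)-(A\mathbf{1}^{k-1})_i$ and using the combinatorial structure of Definition~\ref{def:1} — each hyperedge $e\ni i$ contributes exactly $(k-1)!$ ordered tuples $(i,i_2,\dots,i_k)$, each carrying weight $\tfrac{1}{(k-1)!}$ — the adjacency term collapses to $(A\mathbf{1}^{k-1})_i=d(i)$. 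Hence $(L\mathbf{1}^{k-1})_i = d(i)-d(i)=0=0\cdot(\mathbf{1}^{[k-1]})_i$, so by \eqref{eq:eigenproblem} zero is an eigenvalue with the strictly positive eigenvector $\mathbf{1}$, i.e.\ an $H^{++}$-eigenvalue.

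For the uniqueness of the $H^{++}$-eigenvalue I would use a self-contained extremal-index (Collatz--Wielandt-type) argument. Suppose $Lx^{k-1}=\mu x^{[k-1]}$ with $x>\mathbf{0}$. The same edge-counting turns the $i$-th equation into $\mu x_i^{k-1} = d(i)x_i^{k-1}-\sum_{e\ni i}\prod_{j\in e,\,j\neq i}x_j$. Evaluating at $m=\arg\max_i x_i$, every product satisfies $\prod_{j\in e,\,j\neq m}x_j\le x_m^{k-1}$ and there are $d(m)$ edges through $m$, so the right-hand side is $\ge 0$, forcing $\mu\ge 0$; evaluating at $p=\arg\min_i x_i$ reverses each inequality and forces $\mu\le 0$. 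Therefore $\mu=0$, and zero is the unique $H^{++}$-eigenvalue.

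For the final claim I would pass to the nonnegative/Metzler picture. Since the off-diagonal entries of $L$ are the nonpositive numbers $-A_{i_1\cdots i_k}$, the tensor $-L$ is Metzler, and choosing any $s\ge\max_i d(i)$ I can write $-L=B-s\mathcal{I}$ with $B=A+(s\mathcal{I}-D)\ge\mathbf{0}$, a nonnegative tensor sharing the off-diagonal support of $A$. Because $B\mathbf{1}^{k-1}=s\mathbf{1}^{[k-1]}$ with $\mathbf{1}>\mathbf{0}$, applying Lemma~\ref{thm:perron} to $B$ itself (shift $0$, and irreducibility of $B$ from connectedness of the hypergraph) identifies $s$ as the Perron-$H$-eigenvalue, so $s=\rho(B)$. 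Every real $H$-eigenvalue $\nu_B$ of the nonnegative tensor $B$ then obeys $\nu_B\le\rho(B)=s$, and since the $H$-eigenvalues of $-L=B-s\mathcal{I}$ are exactly $\nu_B-s$ on the same eigenvectors (by the additivity noted after \eqref{eq:eigenproblem}), they all lie in $(-\infty,0]$. Translating back, every $H$-eigenvalue of $L$ is $\ge 0$, and the value $0$ is attained at $\mathbf{1}$, so zero is the smallest $H$-eigenvalue.

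The main obstacle is precisely this last statement, and it is where the two techniques must be combined. The clean extremal-index bound used for uniqueness relies on the eigenvector being \emph{strictly positive}, so it is silent about $H$-eigenvectors carrying sign changes or zero entries; for those I cannot control $\mu$ by a simple max/min comparison and must instead bound the whole real spectrum through $\rho(B)$. This is exactly why the reformulation $-L=B-s\mathcal{I}$ and the Perron--Frobenius estimate $|\nu_B|\le\rho(B)$ are needed. A secondary technical point requiring care is that $D$ is a diagonal tensor with \emph{unequal} entries rather than a scalar multiple of $\mathcal{I}$, so the shift $s$ must be chosen at least as large as the maximum degree to guarantee $B\ge\mathbf{0}$ before Lemma~\ref{thm:perron} can be invoked.
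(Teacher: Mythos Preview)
The paper does not supply a proof of this lemma at all: it is stated as Theorem~4 of \cite{qi2013h} and used as a black box. Your proposal, by contrast, gives a self-contained argument built out of the paper's own Metzler/Perron--Frobenius toolkit (Lemma~\ref{thm:perron}), which is a genuinely different route. The verification that $L\mathbf{1}^{k-1}=\mathbf{0}$ and the Collatz--Wielandt max/min argument for uniqueness of the $H^{++}$-eigenvalue are clean and correct. For the lower bound on the spectrum, your shift trick $-L=B-s\mathcal I$ with $s\ge\max_i d(i)$ and the identification $s=\rho(B)$ via part~(2) of Lemma~\ref{thm:perron} is valid, and the conclusion $\lambda\ge 0$ for every real $H$-eigenvalue of $L$ follows. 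What your approach buys is internal coherence: the lemma becomes a consequence of the paper's own Perron--Frobenius statement rather than an imported fact.

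One caveat worth flagging: you explicitly rely on irreducibility of $B$, which you derive from connectedness of the hypergraph. The lemma as stated (and as in Qi's original Theorem~4) does not carry a connectedness hypothesis, and Definition~\ref{def:1} only assumes every vertex lies in at least one edge. So strictly speaking you are proving a slightly weaker statement. This is harmless for the paper's purposes, since the only place the lemma is invoked is the corollary immediately after it, which does assume a connected hypergraph; but if you wanted to match the lemma exactly you would either need to pass to connected components (the tensor eigenvalue problem decouples block by block when the index set splits) or appeal to the weak Perron--Frobenius bound for general nonnegative tensors rather than Lemma~\ref{thm:perron}.
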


\begin{cor}
    Consider the system \eqref{eq:consensus} on a connected hypergraph with the Laplacian as given in Definition \ref{def:1}. The system converges to an identical value from any positive initial condition, i.e. $\lim_{t\rightarrow \infty} x_i=\alpha>0, \forall i=1,\cdots, n$.
\end{cor}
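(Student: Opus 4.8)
The plan is to recognize the corollary as a direct application of Proposition \ref{thm:centermanifold} to the tensor $-L$, so the real work lies in verifying that $-L$ satisfies the hypotheses of that proposition: that $-L$ is an irreducible Metzler tensor possessing a zero Perron-eigenvalue with a strictly positive eigenvector. Once this is in place, setting $\tilde{A}=-L$ recasts \eqref{eq:consensus} as $\dot{x}=\tilde{A}x^{k-1}$, which is precisely the form \eqref{eq:sys1} treated by the proposition.

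First I would write $L=D-A$ as in Definition \ref{def:1}, so that $-L=A-D$. Since $D$ is diagonal, every off-diagonal entry of $-L$ coincides with the corresponding nonnegative entry of the adjacency tensor $A$; hence $-L$ is Metzler. For irreducibility, I would invoke the reducibility criterion directly: for any proper nonempty index set $I$, the entries $(-L)_{i_1\cdots i_k}$ with $i_1\in I$ and $i_2,\ldots,i_k\notin I$ are off-diagonal, hence equal to $A_{i_1\cdots i_k}$. Thus $-L$ is reducible if and only if $A$ is, and connectedness of the hypergraph gives irreducibility of $A$, hence of $-L$.

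Next I would transfer the spectral information of Lemma \ref{lem:propl1} from $L$ to $-L$. The key elementary observation is that $Lx^{k-1}=\lambda x^{[k-1]}$ is equivalent to $(-L)x^{k-1}=(-\lambda)x^{[k-1]}$, so the $H$-eigenvalues of $-L$ are exactly the negatives of those of $L$, with identical eigenvectors. Lemma \ref{lem:propl1} states that $0$ is an $H^{++}$-eigenvalue of $L$ with eigenvector $\mathbf{1}$; therefore $0$ is an eigenvalue of $-L$ with the strictly positive eigenvector $x^{*}=\mathbf{1}$. Because $-L$ is an irreducible Metzler tensor admitting a nonnegative eigenvector, part (2) of Lemma \ref{thm:perron} forces this eigenvalue to be the Perron-$H$-eigenvalue; thus $-L$ has a zero Perron-eigenvalue with Perron-eigenvector $\mathbf{1}$.

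With these facts established, Proposition \ref{thm:centermanifold} applies verbatim to $\dot{x}=(-L)x^{k-1}$: the line of equilibria $\alpha\mathbf{1}$, $\alpha\geq0$, is globally attractive over the positive orthant, and no positive initial condition converges to the origin. Consequently every positive solution tends to $\alpha\mathbf{1}$ for some $\alpha>0$, i.e.\ $\lim_{t\to\infty}x_i=\alpha$ for all $i$, which is exactly convergence to a common positive value. The only point requiring care---the main obstacle---is the correct handling of the $H$-eigenvalue sign flip together with the identification of the zero $H^{++}$-eigenvalue of $L$ as the Perron-eigenvalue of $-L$; once the Perron--Frobenius machinery of Lemma \ref{thm:perron} pins down that the positive eigenvector $\mathbf{1}$ indeed corresponds to the Perron-eigenvalue, the dynamical conclusion is immediate from Proposition \ref{thm:centermanifold}.
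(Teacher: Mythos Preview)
Your proof is correct and follows exactly the route the paper intends: the paper's own proof is the single line ``This is a direct consequence of Lemma \ref{lem:propl1} and Proposition \ref{thm:centermanifold},'' and you have simply unpacked the verifications (Metzler structure, irreducibility via $A$, and the use of Lemma \ref{thm:perron}(2) to identify $0$ as the Perron-eigenvalue of $-L$) that the paper leaves implicit.
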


\begin{proof}
    This is a direct consequence of Lemma \ref{lem:propl1} and Proposition \ref{thm:centermanifold}.
\end{proof}

In the paper \cite{hu2015laplacian}, there is another way of defining an adjacency tensor for an undirected uniform hypergraph and a Laplacian tensor.
\begin{definition} [\cite{hu2015laplacian}]\label{def:2}
Let $\mathbf{H}$ be a $k$-uniform hypergraph with vertex set $[n]=\{1, \ldots, n\}$ and edge set $E$. The normalized adjacency tensor ${A}$, which is a  symmetric nonnegative tensor of order $k$ dimension $n$, is defined as
$$
A_{i_1 i_2 \ldots i_k}:= \begin{cases}\frac{1}{(k-1) !} \prod_{j \in[k]} \frac{1}{\sqrt[k]{d_{i_j}}} & \text { if }\left\{i_1, i_2 \ldots, i_k\right\} \in E, \\ 0 & \text { otherwise. }\end{cases}
$$
The normalized Laplacian tensor ${L}$, which is a symmetric tensor of order $k$ dimension $n$, is defined as
$$
{L}:=D-A,
$$
where $D$ is a diagonal tensor of order $k$ dimension $n$ with the $i$-th diagonal element $D_{i \ldots i}=1$ whenever $d(i)>0$, and zero otherwise.
\end{definition}


\begin{lemma}\label{lem:propl2}
Consider the Definition \ref{def:2}. Assume that $k \geq 3$. Zero is the unique $H^{++}$-eigenvalue of $L$ with $H$-eigenvector $\Tilde{d}=(\sqrt[k]{d_1},\cdots, \sqrt[k]{d_n})^\top$, and is the smallest $H$-eigenvalue of $L$.
\end{lemma}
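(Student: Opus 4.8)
The plan is to split the statement into two parts: (i) exhibiting $\Tilde{d}$ as a strictly positive $H$-eigenvector for the eigenvalue $0$ by a direct computation, and (ii) promoting this to the \emph{unique} $H^{++}$-eigenvalue and the smallest $H$-eigenvalue by recognizing $-L$ as an irreducible Metzler tensor and invoking the Perron--Frobenius theorem of Lemma \ref{thm:perron}. Throughout I assume the hypergraph is connected (equivalently $A$ is irreducible, and in particular $d(i)>0$ for every $i$); this is the standing hypothesis already implicit in the analogous Lemma \ref{lem:propl1}.

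First I would verify directly that $L\,\Tilde{d}^{\,k-1}=\mathbf{0}$. Writing $\Tilde{d}_i=d_i^{1/k}$, the diagonal part gives $(D\,\Tilde{d}^{\,k-1})_i=D_{i\cdots i}\,\Tilde{d}_i^{\,k-1}=d_i^{(k-1)/k}$. For the adjacency part, fix $i$ and a hyperedge $e\ni i$; the $(k-1)!$ ordered tuples $(i_2,\dots,i_k)$ exhausting the other vertices of $e$ each contribute $A_{i,i_2\cdots i_k}\,\Tilde{d}_{i_2}\cdots\Tilde{d}_{i_k}=\frac{1}{(k-1)!}\,d_i^{-1/k}\prod_{j\ge2}d_{i_j}^{-1/k}\cdot\prod_{j\ge2}d_{i_j}^{1/k}=\frac{1}{(k-1)!}\,d_i^{-1/k}$, so each edge through $i$ contributes $d_i^{-1/k}$ after summing the orderings, and the $d(i)$ edges through $i$ give $(A\,\Tilde{d}^{\,k-1})_i=d(i)\,d_i^{-1/k}=d_i^{(k-1)/k}$. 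Hence $(L\,\Tilde{d}^{\,k-1})_i=d_i^{(k-1)/k}-d_i^{(k-1)/k}=0$, and since every $d_i>0$ the vector $\Tilde{d}$ is strictly positive, so $0$ is an $H^{++}$-eigenvalue with $H$-eigenvector $\Tilde{d}$.

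Next, by Definition \ref{def:2} we have $D=\mathcal{I}$ when all degrees are positive, so $-L=A-\mathcal{I}$. Its off-diagonal entries are exactly the nonnegative entries of $A$, so $-L$ is Metzler of the form $B-s\mathcal{I}$ with $B=A$ and $s=1$; moreover $A$ shares the edge-support of the connected hypergraph and is therefore irreducible. Lemma \ref{thm:perron} then applies to $-L$: it has a Perron-$H$-eigenvalue $\lambda(-L)=\rho(A)-1$ with a strictly positive eigenvector, and any \emph{nonnegative} eigenvector must be associated with this Perron eigenvalue and is unique up to scaling. Since $\Tilde{d}>\mathbf{0}$ is an eigenvector of $-L$ for the eigenvalue $0$, uniqueness forces $\lambda(-L)=0$ (so in fact $\rho(A)=1$) and pins the nonnegative eigendirection to $\Tilde{d}$. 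To finish, translate back through $L=-(-L)$: its $H$-eigenvalues are the negatives of those of $-L$, with the same eigenvectors. Because $\lambda(-L)=0$ dominates all real $H$-eigenvalues of the Metzler tensor $-L$ (these equal $\nu-1$ for real $H$-eigenvalues $\nu\le\rho(A)=1$ of $A$), every real $H$-eigenvalue of $L$ is $\ge0$, so $0$ is the smallest $H$-eigenvalue of $L$. Finally, if $\lambda$ were any $H^{++}$-eigenvalue of $L$, then $-\lambda$ would be an $H$-eigenvalue of $-L$ with a strictly positive (hence nonnegative) eigenvector, forcing $-\lambda=\lambda(-L)=0$; thus $0$ is the unique $H^{++}$-eigenvalue.

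The main obstacle is resisting the tempting but flawed shortcut of relating $L$ to the unnormalized Laplacian $L^{(1)}$ of Definition \ref{def:1} by a diagonal similarity and then quoting Lemma \ref{lem:propl1} through Lemma \ref{lem:eig}. This route fails because the normalized adjacency $A$ is supersymmetric, whereas a diagonal-similarity image $P^{-(k-1)}A^{(1)}P$ treats the first index slot asymmetrically (exponent $-(k-1)$ on $p_{i_1}$ against exponent $+1$ on the remaining $p_{i_j}$) and is therefore not supersymmetric for generic degrees; concretely, writing $p_i=d_i^{\alpha}$, matching the $i_1$-exponent to $A$ forces $\alpha=\tfrac{1}{k(k-1)}$ while matching any other index forces $\alpha=-\tfrac{1}{k}$, and these agree only when $k=0$. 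Hence no diagonal $P$ realizes $L=P^{-(k-1)}L^{(1)}P$, and arguing intrinsically through the Metzler Perron--Frobenius Lemma \ref{thm:perron} is what makes the proof go through. The only genuinely delicate bookkeeping is the edge-counting in the first step---the $(k-1)!$ orderings per edge and the exact cancellation of the degree normalizations---together with the sign-and-shift translation that turns the dominant eigenvalue of $-L$ into the smallest $H$-eigenvalue of $L$.
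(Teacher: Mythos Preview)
Your proof is correct, but it takes a different route from the paper's own argument. The paper dispatches the uniqueness and minimality claims by a bare citation to Corollary~3.2 of \cite{hu2015laplacian}, and then simply asserts that $L\Tilde{d}^{k-1}=\mathbf{0}$ is easy to check. You instead make the argument entirely self-contained: you carry out the edge-counting computation for $L\Tilde{d}^{k-1}=\mathbf{0}$ explicitly, and then recognize $-L=A-\mathcal{I}$ as an irreducible Metzler tensor so that Lemma~\ref{thm:perron} pins the Perron eigenvalue at $0$ and forces uniqueness of the $H^{++}$-eigenvalue together with nonnegativity of every $H$-eigenvalue of $L$. This is exactly the strategy the paper itself adopts for the analogous Lemma~\ref{lem:propl3} (Definition~\ref{def:3}), so your approach is in fact more in keeping with the paper's internal toolkit than the paper's own proof of Lemma~\ref{lem:propl2}. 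What your route buys is independence from the external reference and a transparent reason why $\rho(A)=1$; what the paper's citation buys is brevity. Your closing paragraph on the failure of diagonal similarity to Definition~\ref{def:1} is a nice observation but not needed for the proof.
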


\begin{proof}
    The statement, zero is the unique $H^{++}$-eigenvalue of $L$ and is the smallest $H$-eigenvalue of $L$, is proven by Corollary 3.2 of \cite{hu2015laplacian}. It is easy to check that $L\Tilde{d}^{k-1}=\mathbf{0}$. Therefore, $\Tilde{d}$ is the corresponding eigenvector.
\end{proof}

\begin{cor}
    Consider the system \eqref{eq:consensus} on a connected hypergraph with the Laplacian as in Definition \ref{def:2}. From any positive initial condition,  the system admits a
    clustering consensus solution based on the degree of the nodes, i.e. $\lim_{t\rightarrow \infty} x_i>0, \forall i=1,\cdots,n$ and $\lim_{t\rightarrow \infty} x_i=x_j, \forall d(i)=d(j)$.
\end{cor}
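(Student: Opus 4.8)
The plan is to invoke Proposition \ref{thm:centermanifold} as a black box, after verifying that the hypotheses it requires are satisfied by the Laplacian tensor of Definition \ref{def:2}. The key observation is that $-L$ is precisely an object of the type studied in Proposition \ref{thm:centermanifold}: since $L = D - A$ with $A$ a nonnegative (normalized adjacency) tensor and $D$ a diagonal tensor, the tensor $-L = A - D$ is a Metzler tensor (its off-diagonal entries coincide with those of $A$, hence are nonnegative), and on a connected hypergraph $A$ is irreducible, so $-L$ is an irreducible Metzler tensor. The dynamics \eqref{eq:consensus} read $\dot{x} = -Lx^{k-1}$, which is exactly system \eqref{eq:sys1} with the Metzler tensor $-L$ in place of $A$.

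Next I would identify the Perron-eigenvalue of $-L$ and its eigenvector. By Lemma \ref{lem:propl2}, zero is the unique $H^{++}$-eigenvalue of $L$, with strictly positive $H$-eigenvector $\tilde{d} = (\sqrt[k]{d_1}, \ldots, \sqrt[k]{d_n})^\top$; consequently zero is an eigenvalue of $-L$ with the same strictly positive eigenvector $\tilde{d}$. Because $\tilde{d} > \mathbf{0}$, part (2) of Lemma \ref{thm:perron} forces this eigenvalue to be the Perron-$H$-eigenvalue of the irreducible Metzler tensor $-L$. Thus $-L$ has a zero Perron-eigenvalue with positive Perron-eigenvector $\tilde{d}$, which is exactly the hypothesis of Proposition \ref{thm:centermanifold}.

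Applying Proposition \ref{thm:centermanifold} with $x^* = \tilde{d}$, I conclude that \eqref{eq:consensus} has a globally attractive line of equilibria $\alpha \tilde{d}$, $\alpha \geq 0$, and that from any positive initial condition the solution is bounded and does not converge to the origin, so $\alpha > 0$. It remains only to translate this into the stated clustering conclusion: since the limit is $\alpha \tilde{d}$ with $\alpha > 0$, every component satisfies $\lim_{t\to\infty} x_i = \alpha \sqrt[k]{d_i} > 0$, and whenever $d(i) = d(j)$ the corresponding entries of $\tilde{d}$ are equal, giving $\lim_{t\to\infty} x_i = \lim_{t\to\infty} x_j$.

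I expect the only genuine point requiring care to be the verification that $-L$ is irreducible on a connected hypergraph and that the zero eigenvalue is indeed the \emph{Perron} eigenvalue rather than merely some eigenvalue; the former follows from the definition of connectivity via irreducibility of the adjacency tensor, and the latter is secured by the uniqueness-of-positive-eigenvector clause in Lemma \ref{thm:perron}. Everything else is a direct citation of Proposition \ref{thm:centermanifold}, so the proof should be short, closely paralleling the preceding corollary for Definition \ref{def:1}.
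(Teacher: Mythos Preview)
Your proposal is correct and follows essentially the same approach as the paper, which records the proof simply as a direct consequence of Lemma \ref{lem:propl2} and Proposition \ref{thm:centermanifold}. You have merely spelled out the verifications (irreducibility of $-L$, that zero is the Perron-eigenvalue via Lemma \ref{thm:perron}, and the translation of $\alpha\tilde d$ into the clustering statement) that the paper leaves implicit.
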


\begin{proof}
    This is the direct consequence of Lemma \ref{lem:propl2} and Proposition \ref{thm:centermanifold}.
\end{proof}

Note that both Definitions \ref{def:1} and \ref{def:2} require the normalization of the weights and thus only apply to the unweighted hypergraph. Next, we can deal with a more general case of an arbitrary directed weighted hypergraph. We propose a formal definition of a Laplacian tensor for a signless hypergraph as follows:

\begin{definition} \label{def:3}
Let $\mathbf{H}$ be a $k$-uniform hypergraph with vertex set $[n]=\{1, \ldots, n\}$ and edge set $E$. The adjacency tensor ${A}$, which is a nonnegative tensor of order $k$ dimension $n$, is defined as
$A_{i_1 i_2 \ldots i_k}\in\mathbb{R}_+ \text { if }\left\{i_1, i_2 \ldots, i_k\right\} \in E; A_{i_1 i_2 \ldots i_k}= 0 \text { if otherwise.}$
The Laplacian tensor ${L}$, which is a symmetric tensor of order $k$ dimension $n$, is defined as
$$
{L}:=D-A,
$$
where $D$ is a diagonal tensor of order $k$ dimension $n$ with the $i$-th diagonal element $D_{i \ldots i}=\sum_{i_2, \ldots, i_k=1}^n A_{i,i_2 \cdots i_k}$.
\end{definition}

This definition can be regarded as an extension to the definition of a Laplacian matrix of a directed weighted graph.

\begin{lemma}\label{lem:propl3}
     Assume that $k \geq 3$. Zero is the unique $H^{++}$-eigenvalue of $L$ with $H$-eigenvector $\mathbf{1}$, and is the smallest $H$-eigenvalue of $L$.
\end{lemma}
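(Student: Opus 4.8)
The plan is to reduce the statement entirely to the Perron--Frobenius theory for irreducible Metzler tensors (Lemma \ref{thm:perron}) by passing from $L$ to $-L$, exactly as in the proofs of the analogous Lemmas for Definitions \ref{def:1} and \ref{def:2}.

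First I would verify directly that $\mathbf{1}$ is an $H$-eigenvector of $L$ with eigenvalue $0$. Since $D$ is a diagonal tensor, only the all-$i$ index contributes to its $i$-th row, so $(L\mathbf{1}^{k-1})_i = \sum_{i_2,\ldots,i_k}(D-A)_{i,i_2\cdots i_k} = D_{i\cdots i} - \sum_{i_2,\ldots,i_k} A_{i,i_2\cdots i_k}$, which vanishes by the defining row-sum balance $D_{i\cdots i} = \sum_{i_2,\ldots,i_k} A_{i,i_2\cdots i_k}$ of Definition \ref{def:3}. Hence $L\mathbf{1}^{k-1} = \mathbf{0} = 0\cdot\mathbf{1}^{[k-1]}$, so $0$ is an $H^{++}$-eigenvalue of $L$ with the strictly positive eigenvector $\mathbf{1}$.

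Next I would observe that $-L = A - D$ is a Metzler tensor, since its off-diagonal entries coincide with those of the nonnegative tensor $A$ and are therefore nonnegative. Writing $-L = B - s\mathcal{I}$ with $B$ nonnegative and, under the standing connectivity assumption, $A$ (equivalently $B$) irreducible, Lemma \ref{thm:perron} applies to $-L$. The eigenpairs of $-L$ are precisely those of $L$ with the eigenvalue negated: $Lx^{k-1}=\mu x^{[k-1]}$ if and only if $(-L)x^{k-1}=(-\mu)x^{[k-1]}$. Because $\mathbf{1}>\mathbf{0}$ is a positive eigenvector of $-L$ for the eigenvalue $0$, part $(2)$ of Lemma \ref{thm:perron} forces $0$ to be the Perron-$H$-eigenvalue $\lambda(-L)$ and $\mathbf{1}$ to be the unique nonnegative eigenvector up to a multiplicative constant.

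Finally I would translate this back to $L$. Uniqueness of the positive eigenvector of $-L$ means $\mathbf{1}$ is the only $H^{++}$-eigenvector of $L$, so $0$ is the unique $H^{++}$-eigenvalue. Moreover the Perron eigenvalue $\lambda(-L)=\rho(B)-s$ dominates every real $H$-eigenvalue of $-L$, since $\rho(B)$ bounds the real $H$-eigenvalues of the nonnegative $B$; thus every real $H$-eigenvalue of $-L$ is $\le 0$, whence every $H$-eigenvalue of $L$ is $\ge 0$ and $0$ is the smallest. The main obstacle I anticipate is the irreducibility hypothesis: the conclusion is clean only when $A$ is irreducible (the hypergraph is strongly connected), as this is exactly what makes part $(2)$ of the Perron--Frobenius lemma yield both uniqueness of the positive eigenvector and the extremality of the eigenvalue. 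I would therefore make the connectivity assumption explicit and note that, without it, only the weaker existence of the eigenpair $(0,\mathbf{1})$ survives.
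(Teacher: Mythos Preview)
Your proposal is correct and follows essentially the same route as the paper: verify $L\mathbf{1}^{k-1}=\mathbf{0}$ directly, then invoke the Perron--Frobenius result (Lemma~\ref{thm:perron}) to obtain uniqueness of the positive eigenvector and minimality of the eigenvalue. You are in fact more careful than the paper's own two-line proof, since you make explicit both the passage from $L$ to the Metzler tensor $-L$ and the irreducibility (strong connectivity) hypothesis that the paper leaves implicit.
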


\begin{proof}
    It is straightforward to check that $L(\mathbf{1})^{k-1}=\mathbf{0}$. Since $\mathbf{1}$ is strictly positive, then zero is the unique $H^{++}$-eigenvalue and the smallest $H$-eigenvalue of $L$ according to Theorem \ref{thm:perron}.
\end{proof}

\begin{cor}\label{cor:1}
    Consider the system \eqref{eq:consensus} on a strongly connected signless hypergraph with the Laplacian as in Definition \ref{def:3}. From any positive initial condition, the system converges to an identical value, i.e. $\lim_{t\rightarrow \infty} x_i=\alpha>0, \forall i=1,\cdots, n$.
\end{cor}

\begin{proof}
    This is the direct consequence of Lemma \ref{lem:propl3} and Proposition \ref{thm:centermanifold}.
\end{proof}

The general non-uniform hypergraph up to the leading order $k$ can be regarded as a multilayer network consisting of a $2$-uniform hypergraph up to a $k$-uniform hypergraph. Now, we consider the following system:

\begin{equation}\label{eq:sysnonl}
    \dot{x}=-L_{k-1} x^{k-1}-L_{k-2} x^{k-2}+\cdots -L_{1} x,
\end{equation}
where all the tensors $L_{k-1},\cdots,L_1$ are Laplacian tensors of each layer of uniform sub-hypergraph.

\begin{cor}\label{cor:2}
    Consider the system \eqref{eq:sysnonl} on a non-uniform signless hypergraph with each Laplacian tensor $L_i$ as in Definition \ref{def:3}. Assume that each uniform sub-hypergraph is strongly connected. From any positive initial condition, the system converges to an identical value, i.e. $\lim_{t\rightarrow \infty} x_i=\alpha>0, \forall i=1,\cdots, n$.
\end{cor}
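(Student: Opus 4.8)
The plan is to recognize the claim as a direct corollary of Theorem~\ref{prop:c2}, in exact analogy with how Corollary~\ref{cor:1} followed from Proposition~\ref{thm:centermanifold}. The whole task reduces to verifying that the negated Laplacian tensors $-L_i$ satisfy the two standing hypotheses of Theorem~\ref{prop:c2}: each $-L_i$ must be an irreducible Metzler tensor, and the collection must share a common Perron-eigenvector associated with the zero Perron-eigenvalue.

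First I would write each layer's operator as $-L_i = A_i - D_i$, where $A_i$ is the nonnegative adjacency tensor of order $i+1$ of the $(i+1)$-uniform sub-hypergraph and $D_i$ is its diagonal degree tensor from Definition~\ref{def:3}. The off-diagonal entries of $-L_i$ coincide with those of $A_i$ and are hence nonnegative, so $-L_i$ is Metzler. Because the reducibility condition involves only entries with $i_1 \in I$ and $i_2,\ldots,i_{i+1} \notin I$, which can never coincide with a diagonal entry, the tensor $-L_i$ inherits the irreducibility of $A_i$; strong connectivity of each sub-hypergraph guarantees that $A_i$ is irreducible.

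Next I would verify the common-eigenvector condition. The row-sum construction $D_{j \cdots j} = \sum A_{j, i_2 \cdots}$ of Definition~\ref{def:3} yields, layer by layer, $L_i \mathbf{1}^{i} = D_i \mathbf{1}^i - A_i \mathbf{1}^i = \mathbf{0}$, which is precisely Lemma~\ref{lem:propl3} applied at each uniform layer. Thus $\mathbf{1} > \mathbf{0}$ is an eigenvector of every $-L_i$ with eigenvalue zero, and since this eigenvector is strictly positive, the uniqueness clause of the tensor Perron--Frobenius theorem (Lemma~\ref{thm:perron}, part~(2)) forces zero to be the Perron-$H$-eigenvalue of each $-L_i$. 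Consequently all $-L_i$ share the common Perron-eigenvector $\mathbf{1}$ at the zero Perron-eigenvalue, so Theorem~\ref{prop:c2} applies verbatim: the line $\alpha\mathbf{1}$, $\alpha\geq 0$, is a globally attractive equilibrium set in the positive orthant, no positive solution tends to the origin, and finite initial data remain bounded. Every positive trajectory therefore converges to some $\alpha\mathbf{1}$ with $\alpha > 0$, i.e.\ $\lim_{t\to\infty} x_i = \alpha > 0$ for all $i$.

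I expect the only genuine subtlety --- the main obstacle --- to be the bookkeeping across layers of different orders: the terms $A_i x^{i}$ are driven by tensors of distinct orders, so I must confirm both that $\mathbf{1}$ annihilates each $L_i$ at its own order and that zero is truly the Perron (and not merely some) $H$-eigenvalue of each $-L_i$. Once these per-layer facts are established the conclusion is immediate, so the argument is a verification of hypotheses rather than a new dynamical estimate.
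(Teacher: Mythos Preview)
Your proposal is correct and follows essentially the same route as the paper, which states only that the result is a direct consequence of Lemma~\ref{lem:propl3} and Theorem~\ref{prop:c2}. You simply make explicit the hypothesis-checking (irreducible Metzler structure of each $-L_i$ and the common Perron-eigenvector $\mathbf{1}$ at eigenvalue zero) that the paper leaves implicit.
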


\begin{proof}
    This is the direct consequence of Lemma \ref{lem:propl3} and Theorem \ref{prop:c2}.
\end{proof}

According to our findings, we can draw a conclusion that the collective behavior on a signless hypergraph is similar to that on a conventional signless graph in the positive orthant.

\section{Higher-order Laplacian dynamic and bipartite consensus on a signed hypergraph}

In this section, we further study the higher-order Laplacian dynamics on signed hypergraphs and we show that they can be used to analyze the collective behavior on higher-order networks.
Firstly, one of the general collective behavior that we can observe from a graph model (such as an Altafini model \cite{altafini2012consensus}) is usually bipartite consensus, i.e. $\lim_{t\rightarrow \infty} |x_i|=\alpha>0, \forall i=1,\cdots, n$.

Inspired by \cite{altafini2012consensus}, we propose the Laplacian tensor for a signed directed uniform hypergraph as follows:

\begin{definition} \label{def:4}
Let $\mathbf{H}$ be a $k$-uniform hypergraph with vertex set $[n]=\{1, \ldots, n\}$ and edge set $E$. The adjacency tensor ${A}$, which is a  nonnegative tensor of order $k$ dimension $n$, is defined as
$A_{i_1 i_2 \ldots i_k}\in\mathbb{R} \text { if }\left\{i_1, i_2 \ldots, i_k\right\} \in E; A_{i_1 i_2 \ldots i_k}= 0 \text { otherwise.}$
The Laplacian tensor ${L}$, which is a symmetric tensor of order $k$ dimension $n$, is defined as
$$
{L}:=D-A,
$$
where $D$ is a diagonal tensor of order $k$ dimension $n$ with the $i$-th diagonal element $D_{i \ldots i}=\sum_{i_2, \ldots, i_k=1}^n |A_{i,i_2 \cdots i_k}|$.
\end{definition}


We consider the system \eqref{eq:consensus} with the Definition \ref{def:4} of the Laplacian. Firstly, we perform a gauge transformation $z=Gx$ , where $G$ is a diagonal matrix with diagonal entries being either $-1$ or $1$, for the system \eqref{eq:consensus}. Notice that $G=G^\top=G^{-1}$.
Via the gauge transformation, we have 
\begin{equation*}
    \dot{z}_i=-G_{ii}^{-1}\sum_{i_2, \ldots, i_k=1}^n L_{i, i_2 \cdots i_m} G_{i_2 i_2}z_{i_2} \cdots G_{i_k i_k}z_{i_k}.
\end{equation*}

Then, the system after the gauge transformation can be written as
\begin{equation}\label{eq:consensusafg}
    \dot{x}=-L_D x^{k-1},
\end{equation}
where $L_D$ is defined as $(L_D)_{i, i_2 \cdots i_m}=G_{ii}^{-1} L_{i, i_2 \cdots i_m} G_{i_2 i_2} \cdots G_{i_k i_k}$.

Recall that if a tensor $A$ is diagonally similar to a tensor $B$, then there exists an invertible diagonal matrix $P$ such that ${B}=P^{1-k} {A} P$ \cite{shao2013general}. Then
$$
\begin{aligned}
B_{i_1 i_2 \ldots i_k} & =\left(P^{1-k} {A} P\right)_{i_1 i_2 \ldots i_k} \\
& =A_{i_1 i_2 \ldots i_k} P_{i_1 i_1}^{1-k} P_{i_2 i_2} \ldots P_{i_k i_k} .
\end{aligned}
$$

Thus, if $k$ is even, $L_D$ and $L$ are diagonally similar and have the same eigenvalues.

Similar to the convergence dynamics on a signed graph (Altafini model \cite{altafini2012consensus}), we now introduce the notion of the structural balance of a signed uniform hypergraph.

\begin{definition}\label{def:sb}
    A signed $k$-uniform hypergraph $\mathbf{H}(A)$ is said to be \emph{structurally balanced} if it admits a bipartition of the nodes $\mathcal{V}_1, \mathcal{V}_2$, with $\mathcal{V}_1 \cup \mathcal{V}_2=\mathcal{V},\, \mathcal{V}_1 \cap \mathcal{V}_2=\emptyset$ and with $\sigma_i=1, \forall i\in \mathcal{V}_1,\, \sigma_j=-1, \forall j\in \mathcal{V}_2$ such that $\sgn(A_{i_1,\cdots,i_k})=\sigma_{i_1}\sigma_{i_2}\cdots\sigma_{i_k}$. The signed $k$-uniform hypergraph $\mathbf{H}(A)$ is said to be \emph{structurally unbalanced} otherwise. We call the choice of $\mathcal{V}_1, \mathcal{V}_2$ the faction formation. The vector $\sigma$ is the vector of faction formation.
\end{definition}

Then, we have the following result of bipartite consensus.



\begin{prop}\label{thm:bipartite}
    Consider the system \eqref{eq:consensus} on a signed $k$-uniform hypergraph $\mathbf{H}$ with the Laplacian given by Definition \ref{def:4}. Assume that $k$ is even and let $G=\textbf{Diag}(\sigma)$, where $\sigma$ is a vector of faction formation. If $\mathbf{H}$ is structurally balanced with vector of faction formation $\sigma$ and strongly connected, then from any initial condition $x$ such that $Gx\in \mathbb{R}_{++}^n$, the system admits a bipartite consensus solution, i.e. $\lim_{t\rightarrow \infty} |x_i|=\alpha>0, \forall i=1,\cdots, n$ and in addition $\text{Sign}(x_i(t))=\text{Sign}(x_i(0))$, where $\text{Sign}(y)$ denotes the sign of $y$.
    \end{prop}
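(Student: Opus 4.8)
The plan is to reduce the signed dynamics to the signless case already resolved in Corollary~\ref{cor:1}, using the gauge transformation $z=Gx$ with $G=\mathbf{Diag}(\sigma)$ exactly as set up in the text preceding the statement. First I would substitute $z=Gx$ into $\dot{x}=-Lx^{k-1}$ to obtain $\dot{z}=-L_D z^{k-1}$, where $(L_D)_{i,i_2\cdots i_k}=G_{ii}^{-1}L_{i,i_2\cdots i_k}G_{i_2 i_2}\cdots G_{i_k i_k}=\sigma_i\sigma_{i_2}\cdots\sigma_{i_k}L_{i,i_2\cdots i_k}$, using $\sigma_i=\sigma_i^{-1}\in\{-1,1\}$. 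Note $L_D$ remains supersymmetric since the sign factor is permutation-invariant.

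The crux of the argument is to show that structural balance turns $L_D$ into a genuine signless Laplacian tensor in the sense of Definition~\ref{def:3}. Writing $L=D-A$, the off-diagonal part transforms to $(A_D)_{i,i_2\cdots i_k}=\sigma_i\sigma_{i_2}\cdots\sigma_{i_k}A_{i,i_2\cdots i_k}$, and by the structural-balance identity $\sgn(A_{i_1\cdots i_k})=\sigma_{i_1}\cdots\sigma_{i_k}$ of Definition~\ref{def:sb} this equals $|A_{i,i_2\cdots i_k}|\geq 0$, so $A_D$ is nonnegative. I would then check that the diagonal is untouched: $(L_D)_{ii\cdots i}=\sigma_i^{k-2}D_{ii\cdots i}=D_{ii\cdots i}$ because $k$ is even, while the sign flips leave magnitudes fixed, so $\sum_{i_2,\ldots,i_k}|(A_D)_{i,i_2\cdots i_k}|=\sum_{i_2,\ldots,i_k}|A_{i,i_2\cdots i_k}|=D_{ii\cdots i}$. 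Hence $L_D=D-A_D$ is precisely the signless Laplacian of the hypergraph $\mathbf{H}(A_D)$ with nonnegative adjacency tensor $A_D$. Since the sign-flip gauge preserves the sparsity pattern, $A_D$ is irreducible iff $A$ is, so strong connectivity carries over; this is exactly the even-$k$ diagonal-similarity observation recorded just before the statement.

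With $L_D$ identified as a signless Laplacian of a strongly connected hypergraph, Lemma~\ref{lem:propl3} gives that zero is the unique $H^{++}$-eigenvalue of $L_D$ with eigenvector $\mathbf{1}$ and its smallest $H$-eigenvalue, so $-L_D$ is an irreducible Metzler tensor with zero Perron-eigenvalue. Corollary~\ref{cor:1} (equivalently Proposition~\ref{thm:centermanifold} applied to $-L_D$) then yields that, from the positive initial condition $z(0)=Gx(0)\in\mathbb{R}_{++}^n$, the solution converges to $\alpha\mathbf{1}$ for some $\alpha>0$. Transforming back via $x=Gz$ gives $x_i(t)\to\sigma_i\alpha$, whence $|x_i(t)|\to\alpha>0$, the claimed bipartite consensus.

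Finally, for the sign-preservation claim I would invoke strict positivity of $z$ along the whole trajectory: choosing $\beta>0$ with $z(0)\geq\beta\mathbf{1}$ and running the positive-error-dynamics argument of Proposition~\ref{thm:centermanifold} with equilibrium $\beta\mathbf{1}$ shows $z(t)\geq\beta\mathbf{1}>0$ for all $t$; then $\text{Sign}(x_i(t))=\text{Sign}(\sigma_i z_i(t))=\sigma_i=\text{Sign}(x_i(0))$ is constant in time. The main obstacle I anticipate is not a single deep estimate but the careful bookkeeping of the second paragraph: one must verify that the gauge transformation sends $L$ to a bona fide Definition~\ref{def:3} Laplacian, namely nonnegativity of $A_D$, invariance of the diagonal degree entries, and preservation of irreducibility. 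It is precisely here that both hypotheses are essential --- without $k$ even the diagonal-similarity and the degree identity fail, and without structural balance $A_D$ need not be nonnegative, so the reduction to the signless theory collapses.
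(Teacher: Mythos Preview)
Your proposal is correct and follows essentially the same route as the paper: apply the gauge transformation $z=Gx$, use structural balance to show $L_D$ becomes a signless Laplacian (equivalently, $-L_D$ is irreducible Metzler with zero Perron-eigenvalue), invoke Corollary~\ref{cor:1}, and transform back. Your write-up is in fact more careful than the paper's own proof --- you explicitly verify the diagonal invariance, irreducibility preservation, and the sign-preservation claim (which the paper's proof omits); the only slip is that the diagonal exponent should read $\sigma_i^{k}$ rather than $\sigma_i^{k-2}$, but both equal $1$ for even $k$, and the remark about $L_D$ being supersymmetric is superfluous (and need not hold in the directed setting of Definition~\ref{def:4}) though harmless.
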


\begin{proof}
    Let the gauge transformation be consistent with the faction formation of the balanced structure of the hypergraph, i.e, $G=\textbf{Diag}(\sigma)$ where $\sigma$ is the vector of faction formation given in Definition \ref{def:sb}. From Definition \ref{def:sb}, the hypergraph admits a bipartition of the nodes $\mathcal{V}_1, \mathcal{V}_2, \mathcal{V}_1 \cup \mathcal{V}_2=$ $\mathcal{V}, \mathcal{V}_1 \cap \mathcal{V}_2=0$ and let $G_{ii}=\sigma_i=1, \forall i\in \mathcal{V}_1, G_{jj}=\sigma_j=-1, \forall j\in \mathcal{V}_2$. Thus, $(L_D)_{i, i_2 \cdots i_m}=G_{ii}^{-1} L_{i, i_2 \cdots i_m} G_{i_2 i_2} \cdots G_{i_k i_k}= -G_{ii}^{-1} A_{i, i_2 \cdots i_m} G_{i_2 i_2} \cdots G_{i_k i_k}$ for $i_2 \cdots i_m$ not equal to $i$ at the same time. By Definition \ref{def:sb}, $\sgn(A_{i_1,\cdots,i_k})=G_{i_1i_1}G_{i_2i_2}\cdots G_{i_ki_k}$. This implies that $-L_D$ is an irreducible Metzler tensor. Since the hypergraph is structurally balanced with the vector of faction formation $\sigma$, the system after the gauge transformation \eqref{eq:consensusafg} becomes a system on a signless uniform hypergraph. From Corollary \ref{cor:1}, we obtain that the system \eqref{eq:consensusafg} converges to an identical value from any positive initial condition and thus the original system \eqref{eq:consensus} reach bipartite consensus from any initial condition $Gx>0$. 
\end{proof}

\begin{remark}
    Notice that one faction could be empty, and this will lead to a consensus case. The case of a consensus convergence behavior is a special case of bipartite consensus. It is clear that when $k=2$, the Definition \ref{def:sb} is the Definition of structural balance on a signed graph, and in such a case, the model \eqref{eq:consensus} becomes the famous Altafini model \cite{altafini2012consensus} on a signed graph.
\end{remark}

Now, we can further consider the collective behavior on a signed non-uniform hypergraph.

\begin{thm}\label{cor:3}
    Consider the system \eqref{eq:sysnonl} on a signed non-uniform hypergraph $\mathbf{H}$ with the Laplacian of each uniform sub-hypergraph given by Definition \ref{def:4}. From any initial condition $Gx\in \mathbb{R}_{++}^n$, if each uniform sub-hypergraph $\mathbf{H}$ is of even order, structurally balanced with respect to the same faction formation, and strongly connected, the system admits a bipartite consensus solution, i.e. $\lim_{t\rightarrow \infty} |x_i|=\alpha>0, \forall i=1,\cdots, n$.
\end{thm}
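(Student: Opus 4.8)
The plan is to reduce the signed non-uniform problem to the already-settled signless non-uniform case of Corollary \ref{cor:2} by means of a single gauge transformation, exactly as Proposition \ref{thm:bipartite} reduces the signed uniform case to Corollary \ref{cor:1}. Since every sub-hypergraph is assumed structurally balanced with respect to the \emph{same} faction formation $\sigma$, I would set $G=\textbf{Diag}(\sigma)$ once and for all and introduce $z=Gx$, using $G=G^\top=G^{-1}$. First I would check that, just as in \eqref{eq:consensusafg}, the transformation decouples layer-by-layer: since $G_{ii}=G_{ii}^{-1}$, one obtains $\dot z=-\sum_{j} L_{D,j}\,z^{j}$ with $(L_{D,j})_{i,i_2\cdots i_{j+1}}=G_{ii}^{-1}(L_j)_{i,i_2\cdots i_{j+1}}G_{i_2 i_2}\cdots G_{i_{j+1}i_{j+1}}$, so each layer transforms independently and the mixed-power structure of \eqref{eq:sysnonl} is preserved.

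The crux is to verify that each transformed tensor $-L_{D,j}$ is an irreducible Metzler tensor arising from a signless Laplacian in the sense of Definition \ref{def:3}, with zero Perron-eigenvalue and Perron-eigenvector $\mathbf{1}$. For the off-diagonal entries I would substitute $\sgn(A_{i,i_2\cdots i_{j+1}})=\sigma_i\sigma_{i_2}\cdots\sigma_{i_{j+1}}$ from Definition \ref{def:sb}; the two copies of each $G_{i_\ell i_\ell}$ then square to $1$ and leave $-(L_{D,j})_{i,i_2\cdots i_{j+1}}=|A_{i,i_2\cdots i_{j+1}}|\ge 0$, which holds independently of the order. The role of the \emph{even-order} hypothesis appears on the diagonal: the diagonal entry picks up a factor $G_{ii}^{\,j-1}$, and this equals $1$ precisely when the order $j+1$ is even, so that $(L_{D,j})_{i\cdots i}$ stays equal to the degree $\sum|A_{i,i_2\cdots i_{j+1}}|$. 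This is exactly the row-sum condition making $L_{D,j}$ a genuine signless Laplacian, whence $L_{D,j}\mathbf{1}^{j}=\mathbf{0}$ and, by Lemma \ref{lem:propl3} together with strong connectivity, $-L_{D,j}$ is irreducible Metzler with common Perron-eigenvector $\mathbf{1}$ at the zero Perron-eigenvalue.

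With all layers sharing $\mathbf{1}$ as the common zero-eigenvalue Perron-eigenvector, the transformed system falls under Corollary \ref{cor:2} (equivalently Theorem \ref{prop:c2}): from the positive initial condition $z(0)=Gx(0)\in\mathbb{R}_{++}^n$ it converges to $\alpha\mathbf{1}$ for some $\alpha>0$ and stays bounded and away from the origin. Transforming back via $x=Gz$ yields $|x_i|=|z_i|\to\alpha>0$ for all $i$, which is the claimed bipartite consensus; invariance of the positive orthant for $z$ moreover fixes $\text{Sign}(x_i(t))=\sigma_i$ for all $t$. I expect the main obstacle to be precisely the structural bookkeeping of the previous paragraph — showing that one common $G$ simultaneously converts every signed Laplacian layer into a signless Laplacian with unchanged diagonal — since this is where the two standing hypotheses (shared faction formation and even order of each layer) are genuinely used; once it is in place, the convergence is inherited verbatim from the signless non-uniform result.
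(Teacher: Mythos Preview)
Your proposal is correct and follows essentially the same route as the paper: perform the single gauge transformation $z=Gx$ layer-by-layer to obtain $\dot z=-\sum_j(L_D)_j z^{j}$, verify that structural balance with the common $\sigma$ turns each $(L_D)_j$ into a signless Laplacian (so $-L_{D,j}$ is irreducible Metzler with Perron-eigenvector $\mathbf 1$ at eigenvalue zero), and then invoke Theorem~\ref{prop:c2}/Corollary~\ref{cor:2}. Your explicit bookkeeping on the diagonal entries --- noting that the factor $G_{ii}^{\,j-1}$ equals $1$ exactly when the order $j+1$ is even --- makes transparent where the even-order hypothesis enters, a point the paper's proof leaves implicit.
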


\begin{proof}
We again perform gauge transformation to the system \eqref{eq:sysnonl}. We obtain that 
\begin{equation*}
    \dot{x}=-(L_D)_{k-1} x^{k-1}-(L_D)_{k-2} x^{k-2}+\cdots -(L_D)_{1} x,
\end{equation*}
where $[(L_D)_{m}]_{i, i_2 \cdots i_m}=G_{ii}^{-1} (L_m)_{i, i_2 \cdots i_m} G_{i_2 i_2} \cdots G_{i_k i_k}.$

    The rest of the proof is analog to the proof of Proposition \ref{thm:bipartite} by further using Theorem \ref{prop:c2}.
\end{proof}


In case the uniform hypergraph is structurally unbalanced, the system behavior is challenging to analyze, because the system is no longer a Metzler system (after a gauge transformation). However, \cite{chen2022explicit} suggests if the tensor is orthogonally decomposable (odeco), then the solution of the system can be solved explicitly. This implies that the solution may converge to zero or a fixed value or diverge. However, the main drawback of the technique is that not every tensor is odeco. Hence, such an approach can be only applicable to a special class of the system (odeco homogeneous polynomial system). However, a non-odeco homogeneous polynomial system may still be approximated by an odeco homogeneous polynomial system \cite{chen2022explicit}. The remaining details of a structurally unbalanced case generally remain a topic for future work. So far, we have investigated the collective behavior of higher-order Laplacian dynamics given that the initial condition is located in a desired orthant $Gx\in \mathbb{R}_{++}^n$. The system behavior outside this orthant is usually similar to the structurally unbalanced case. Thus, we see that the bipartite consensus may break outside the orthant of interest. For the case of non-uniform hypergraphs, the system behaviors of the unbalanced case or beyond the desired orthant are even more challenging because of the high nonlinearity.

However, the condition on the initial condition has a clear physical meaning in the context of opinion dynamics. The condition on the initial condition $Gx\in \mathbb{R}_{++}^n$ implies that all agents from the same faction believe a common opinion that the faction believes in (how strongly they believe the common opinion is not important). In other words, there is no agent from a faction who holds a different opinion from the faction it is in. Otherwise, it may break the bipartite consensus. In reality, since the agent no longer believes in what its faction believes in, it may change the faction. This leads to a new structurally balanced case with a proper initial condition. This finally brings a new bipartite consensus state.

Thus, compared with the classical Altafini model on a graph \cite{altafini2012consensus}, the bipartite consensus of the higher-order Laplacian dynamics \eqref{eq:consensusafg} is less robust. The reason is not difficult to see. While the classical Altafini model is a linear model on a graph, the higher-order Laplacian dynamics is a homogeneous (non-homogeneous) model on a uniform (non-uniform) hypergraph, and from its nonlinear structure, it naturally induces a more complicated system behavior. However, we successfully found out that the desired collective behavior of bipartite consensus will remain for an appropriate orthant and structurally balanced network setting. The results will help us to understand the similarities and differences between collective behavior on a graph and a hypergraph.

\section{Interaction functions other than polynomials}
In this section, we adopt a non-polynomial interaction function into the framework of the higher-order Laplacian dynamic. 
Firstly, we discuss the system in the form of 
\begin{equation}\label{eq:abnon}
\dot{x}_i=\sum_{i_2,\cdots,i_k} -L_{i,i_2,\cdots,i_k} f(x_{i_2})\cdots f(x_{i_k}).
\end{equation}

We make the following assumptions on the function $f(x_i)$.
\begin{assumption}\label{ass:1}
    The function $f(x_i)$ is strictly increasing, continuous and satisfies that $f(x_i)>0$ if $x_i>0$.
\end{assumption}

\begin{lemma}
    The positive orthant $\mathbb{R}_+^n$ is positively invariant with respect to the flow of \eqref{eq:abnon}.
\end{lemma}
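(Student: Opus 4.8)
The plan is to establish positive invariance by a boundary (subtangentiality) argument, in the same spirit as the earlier positivity lemma for \eqref{eq:sys1}. By Nagumo's theorem it suffices to verify that the vector field of \eqref{eq:abnon} is subtangential to the closed convex set $\mathbb{R}_+^n$ along its boundary; concretely, I would fix an arbitrary index $i$ and an arbitrary boundary point $x\in\mathbb{R}_+^n$ with $x_i=0$ (and $x_j\geq 0$ for all $j$), and show $\dot x_i\geq 0$ there. Continuity of $f$ makes the right-hand side of \eqref{eq:abnon} (a product of continuous functions) continuous, so solutions exist locally and Nagumo's condition is the correct criterion.

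In the polynomial case the diagonal term carried a factor $x_i^{k-1}$ that simply vanished on $\{x_i=0\}$, leaving only nonnegative off-diagonal contributions. The difficulty here is that Assumption \ref{ass:1} does not force $f(0)=0$: continuity together with $f(x)>0$ for $x>0$ yields only $f(0)=\lim_{x\to 0^+}f(x)\geq 0$, so the diagonal term $(-L_{i,i\cdots i})\,f(0)^{k-1}$ is in general strictly negative and cannot be discarded. Overcoming this is the main obstacle and the crux of the proof.

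To handle it I would exploit the defining Laplacian structure. Writing $-L=A-D$ with $A$ the nonnegative adjacency tensor and $D_{i\cdots i}=\sum_{i_2,\ldots,i_k}A_{i,i_2\cdots i_k}$ gives the row-sum-zero identity $\sum_{i_2,\ldots,i_k}L_{i,i_2\cdots i_k}=0$, equivalently $-L_{i,i\cdots i}=-\sum_{\text{off-diag}}A_{i,i_2\cdots i_k}$. Substituting this lets me recast the $i$-th component in the telescoped form
\begin{equation*}
\dot x_i=\sum_{\text{off-diag}}A_{i,i_2\cdots i_k}\Big(f(x_{i_2})\cdots f(x_{i_k})-f(x_i)^{k-1}\Big),
\end{equation*}
valid for every $x$. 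On $\{x_i=0\}$ each factor obeys $f(x_{i_j})\geq f(0)\geq 0$: by monotonicity of $f$ when $x_{i_j}\geq 0$, and with equality for any index $i_j$ equal to $i$. Since there are exactly $k-1$ such factors, each at least $f(0)\geq 0$, their product dominates $f(0)^{k-1}=f(x_i)^{k-1}$, so every bracket is nonnegative; as $A_{i,i_2\cdots i_k}\geq 0$, this forces $\dot x_i\geq 0$.

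Finally I would invoke Nagumo's theorem to pass from this boundary inequality to forward invariance of $\mathbb{R}_+^n$, exactly paralleling the conclusion drawn for \eqref{eq:sys1}. The only delicate point is the dominance step $f(x_{i_2})\cdots f(x_{i_k})\geq f(0)^{k-1}$, which relies precisely on recording $f(0)\geq 0$ first and on each factor exceeding $f(0)$; once that is in place, monotonicity of $f$ and nonnegativity of $A$ close the argument.
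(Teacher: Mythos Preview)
Your argument is correct and follows the same boundary (subtangentiality) strategy as the paper: check $\dot x_i\geq 0$ on $\{x_i=0\}$ and conclude invariance. The paper's own proof is the single line ``If $x_i=0$, then $\dot x_i\geq 0$ due to Assumption~1,'' which effectively treats the situation as if the diagonal term simply drops out (as in the polynomial case). You are more careful: you note that Assumption~\ref{ass:1} yields only $f(0)\geq 0$, not $f(0)=0$, and you resolve the potentially negative diagonal contribution by exploiting the zero-row-sum property of the signless Laplacian to rewrite $\dot x_i=\sum_{(i_2,\ldots,i_k)\neq(i,\ldots,i)} A_{i,i_2\cdots i_k}\big(f(x_{i_2})\cdots f(x_{i_k})-f(x_i)^{k-1}\big)$, after which monotonicity of $f$ and nonnegativity of $A$ give the sign. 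So the overall route is the same, but your telescoping step supplies a detail that the paper's one-line proof leaves implicit; your version is strictly more complete.
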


\begin{proof}
    If $x_i=0$, then $\Dot{x}_i\geq 0$ due to the assumption 1.
\end{proof}

Next, we show that the system \eqref{eq:abnon} has a similar collective behavior in a desired orthant.

\begin{cor}
    Consider the system \eqref{eq:abnon} under Assumption \ref{ass:1} on a strongly connected signless hypergraph with the Definition \ref{def:3} of Laplacian. From any positive initial condition, the system converges to an identical value, i.e. $\lim_{t\rightarrow \infty} x_i=\alpha>0, \forall i=1,\cdots, n$.
\end{cor}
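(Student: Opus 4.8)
The plan is to mirror the Lyapunov argument used for Proposition \ref{thm:centermanifold} and Corollary \ref{cor:1}, replacing the monomial factors $x_{i_j}$ by $f(x_{i_j})$ and exploiting the fact that Assumption \ref{ass:1} makes $f$ strictly increasing (hence injective) and nonnegative on the positive orthant. First I would rewrite \eqref{eq:abnon} using the decomposition $L=D-A$ from Definition \ref{def:3}, so that
\begin{equation*}
\dot{x}_i = -D_{i\cdots i}\, f(x_i)^{k-1} + \sum_{i_2,\ldots,i_k=1}^n A_{i,i_2\cdots i_k}\, f(x_{i_2})\cdots f(x_{i_k}),
\end{equation*}
and recall that $D_{i\cdots i}=\sum_{i_2,\ldots,i_k} A_{i,i_2\cdots i_k}$, so every row of $L$ sums to zero and $x=\alpha\mathbf{1}$ is an equilibrium for each $\alpha$ (the common value $f(\alpha)^{k-1}$ factors out and multiplies the zero row sum). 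The positive orthant is forward invariant by the preceding lemma, and on it $f(x_i)\ge 0$ by continuity of $f$ together with $f>0$ on $\mathbb{R}_{++}$.

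Next I would introduce the two nonsmooth comparison functions $M(t)=\max_i x_i(t)$ and $W(t)=\min_i x_i(t)$ and show they are monotone. For an index $p$ attaining the maximum, strict monotonicity of $f$ gives $f(x_{i_j})\le f(x_p)$ for every $j$; since all factors are nonnegative, $f(x_{i_2})\cdots f(x_{i_k})\le f(x_p)^{k-1}$, whence $\dot{x}_p\le -D_{p\cdots p}f(x_p)^{k-1}+D_{p\cdots p}f(x_p)^{k-1}=0$. Symmetrically, for an index $q$ attaining the minimum the reversed product inequality yields $\dot{x}_q\ge 0$. Reading these through the appropriate Dini derivatives (the ``index may swap'' caveat already used in the proof of Proposition \ref{thm:centermanifold}), $M$ is non-increasing and $W$ is non-decreasing. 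Because $W(0)=\min_i x_i(0)>0$ for a positive initial condition, we obtain $x_i(t)\ge W(0)>0$ for all $t$, which simultaneously bounds the trajectory, keeps it away from the origin, and forces any limit value $\alpha\ge W(0)>0$.

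Finally I would close the argument with LaSalle's invariance principle applied to $V=M$, using the same nonsmooth invariance machinery invoked for Proposition \ref{thm:centermanifold}. On the largest invariant set contained in $\{\dot V=0\}$, the bound above must hold with equality for every maximizing index $p$, so each hyperedge with $A_{p,i_2\cdots i_k}>0$ forces $f(x_{i_j})=f(x_p)$, and injectivity of $f$ gives $x_{i_j}=x_p=M^*$. Writing $S=\{i:x_i=M^*\}$, this says $A_{i_1 i_2\cdots i_k}=0$ whenever $i_1\in S$ and $i_2,\ldots,i_k\notin S$; since $S$ is nonempty, irreducibility of $A$ (strong connectivity) forbids $S$ from being a proper subset, so $S=\{1,\ldots,n\}$ and the state equals the consensus value $M^*=W^*=\alpha>0$. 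The main obstacle I anticipate is the rigorous handling of the nonsmooth Lyapunov function $\max_i x_i$ --- justifying the Dini-derivative sign computation and the nonsmooth version of LaSalle --- rather than the algebra, which is a direct transcription of the monomial proof with $x_{i_j}$ replaced by $f(x_{i_j})$; a secondary point to verify is that the product monotonicity used for both $M$ and $W$ genuinely relies on the nonnegativity of the factors guaranteed by invariance of $\mathbb{R}_+^n$, so no sign cancellations can spoil the inequalities.
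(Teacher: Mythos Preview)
Your proposal is correct and follows essentially the same route as the paper: a max-type nonsmooth Lyapunov argument, the extended invariance principle, and irreducibility to propagate equality across all nodes. The paper takes $V_m=\max_i\bigl(f(x_i)/x^*_i\bigr)^{k-1}$, which for Definition~\ref{def:3} (where $x^*=\mathbf{1}$) is $\max_i f(x_i)^{k-1}$; since $f$ is strictly increasing this attains its maximum at the same index as your $M=\max_i x_i$, so the derivative computation and the equality analysis on the invariant set are identical. Your additional use of $W=\min_i x_i$ to secure the uniform positive lower bound is a clean and slightly more direct substitute for the error-dynamics positivity argument that the paper inherits from Proposition~\ref{thm:centermanifold}.
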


\begin{proof}
    The proof is similar to the proof of Proposition \ref{thm:centermanifold}. Now one takes $V_m=\max_i (\frac{f_i(x_i)}{x^*_i})^{k-1}$ as the Lyapunov function. Also, note that a continuous function on a compact set is uniformly continuous on that set. Following the same proof of Proposition \ref{thm:centermanifold}, we finally know that the solution of \eqref{eq:abnon} converges to the largest invariance set where $f(x_i)=f(x_j)$ for any $i$ and $j$. Because $f(x_i)$ is increasing, this leads to $x_i=x_j$ for any $i$ and $j$.

\end{proof}

The bipartite consensus can be achieved under a further assumption on the interaction function.

\begin{assumption}\label{ass:2}
    The function $f(x_i)$ is an odd function.
\end{assumption}

\begin{cor}\label{cor:np}
Consider the system \eqref{eq:abnon} under Assumptions \ref{ass:1} and \ref{ass:2} on a signed $k$-uniform hypergraph $\mathbf{H}$ with the Definition \ref{def:4} of Laplacian. Assume that $k$ is even. From any initial condition $Gx\in \mathbb{R}_{++}^n$, if the hypergraph $\mathbf{H}$ is structurally balanced and strongly connected, the system admits a bipartite consensus solution, i.e. $\lim_{t\rightarrow \infty} |x_i|=\alpha>0, \forall i=1,\cdots, n$.
\end{cor}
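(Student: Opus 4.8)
The plan is to mirror the proof of Proposition~\ref{thm:bipartite}, reducing the signed problem to the signless non-polynomial case already settled by the preceding corollary, with the oddness of $f$ (Assumption~\ref{ass:2}) serving as the ingredient that makes the gauge transformation compatible with the nonlinearity. Concretely, I would set $G=\textbf{Diag}(\sigma)$ for the faction vector $\sigma$ of Definition~\ref{def:sb} and introduce $z=Gx$, noting that $G_{ii}=G_{ii}^{-1}\in\{-1,1\}$ and that the hypothesis $Gx\in\mathbb{R}_{++}^n$ means exactly $z>\mathbf{0}$ initially.

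First I would compute the dynamics in the $z$-coordinates. Since $z_i=G_{ii}x_i$ and $f$ is odd, one has $f(x_{i_j})=f(G_{i_ji_j}z_{i_j})=G_{i_ji_j}f(z_{i_j})$, so that $\dot z_i = G_{ii}\dot x_i = -\sum_{i_2,\dots,i_k}(L_D)_{i,i_2\cdots i_k}\,f(z_{i_2})\cdots f(z_{i_k})$, with $(L_D)_{i,i_2\cdots i_k}=G_{ii}^{-1}L_{i,i_2\cdots i_k}G_{i_2i_2}\cdots G_{i_ki_k}$, exactly the tensor appearing in Proposition~\ref{thm:bipartite}. The crucial observation is that Assumption~\ref{ass:2} lets each sign $G_{i_ji_j}$ be pulled out through $f$, which is what turns \eqref{eq:abnon} into a system of the same form driven by $L_D$.

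Next I would verify that $L_D$ is a genuine signless Laplacian in the sense of Definition~\ref{def:3}. For the off-diagonal entries, structural balance gives $\sgn(A_{i,i_2\cdots i_k})=G_{ii}G_{i_2i_2}\cdots G_{i_ki_k}$, whence $(L_D)_{i,i_2\cdots i_k}=-|A_{i,i_2\cdots i_k}|\le 0$, i.e. $-L_D$ is an irreducible Metzler tensor. For the diagonal, $(L_D)_{i\cdots i}=G_{ii}^{k}L_{i\cdots i}$, and here the assumption that $k$ is even is essential: it forces $G_{ii}^{k}=1$, so the diagonal is preserved as $L_{i\cdots i}=\sum_{i_2,\dots,i_k}|A_{i,i_2\cdots i_k}|>0$. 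Hence the $z$-system is precisely the non-polynomial higher-order Laplacian dynamics on a strongly connected signless hypergraph with Laplacian $L_D$.

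Finally I would invoke the preceding corollary (the signless non-polynomial case under Assumption~\ref{ass:1}): from the positive initial condition $z(0)>\mathbf{0}$, the $z$-system converges to a common value $\alpha>0$, so $z_i\to\alpha$ for all $i$. Transforming back via $x_i=G_{ii}z_i$ yields $|x_i|=|z_i|\to\alpha$ for every $i$, which is the claimed bipartite consensus. The main obstacle, and the only place where the hypotheses do real work, is the compatibility check of the previous two paragraphs: oddness of $f$ is needed to commute $G$ with the interaction function, and evenness of $k$ is needed to keep the diagonal of $L_D$ positive. Together these guarantee that the gauge-transformed system is genuinely signless, after which the reduction to the earlier result is immediate.
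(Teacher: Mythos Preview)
Your proposal is correct and follows exactly the route the paper takes: the paper's own proof simply observes that oddness of $f$ makes the gauge transformation from Proposition~\ref{thm:bipartite} go through verbatim, and you have spelled out the details of that reduction. In fact your write-up is considerably more explicit than the paper's two-line proof, correctly isolating where Assumption~\ref{ass:2} (to commute $G$ through $f$) and the evenness of $k$ (to preserve the diagonal of $L_D$) enter.
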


\begin{proof}
    Notice the fact that $f(x_i)$ is an odd function, which means that $f(-x_i)=-f(x_i)$. Thus, all the techniques in the proof of Proposition \ref{thm:bipartite} apply.
\end{proof}

\begin{remark}
    In many related works of nonlinear consensus dynamics \cite{saber2003consensus,sclosa2023graph}, they consider a protocol in the form of 
\begin{equation}\label{eq:nonlinear}
    \Dot{x}_i=\sum_{j\in N_i} \phi_{ij}(x_j-x_i),
\end{equation}
where $N_i$ denotes the set of agent $j$ on a graph and $\phi$ is symmetric such that $\phi_{ij}=\phi_{ji}$. For example, if we assume $\phi_{ij}(x_j-x_i)=(x_j-x_i)^k$ for any $i$ and $j$, then the whole system \eqref{eq:nonlinear} is also a homogeneous polynomial system. However, we can check that the system \eqref{eq:nonlinear} generally can not be represented as a Metzler tensor, which makes \eqref{eq:nonlinear} a totally different system from the dynamics studied in this paper. The main difference is that \eqref{eq:nonlinear} has a conserved quantity, which is the mean value of its initial condition, while the mean value generally is not a conserved quantity of the dynamics in this paper. Another example of conserved quantity is the consensus dynamics studied in \cite{bonettononlinear}. Again, we emphasize that this model \cite{bonettononlinear} is generally not in a Metzler-tensor-based structure.
\end{remark}

\section{Numerical Examples}
In this section, we use a simulation scenario of 4 agents to illustrate our theoretical results.

Firstly, we consider the case in line with Corollary \ref{cor:1}. We set the adjacency tensor as an all-one tensor of order $4$ dimension $4$, which corresponds to a $4$-uniform hypergraph with $4$ nodes. Figure \ref{fig:consensus} shows that all agents finally converge to an identical value from a positive random initial condition. 


Secondly, we consider the case in line with Corollary \ref{cor:2}. We consider a non-uniform hypergraph, which contains a $3$-uniform sub-hypergraph and a sub-graph. For the $3$-uniform sub-hypergraph, the adjacency tensor $A$ is an all-one tensor of order $3$ dimension $4$ except $A_{231}=A_{232}=A_{233}=A_{234}=2$ and $A_{241}=A_{242}=A_{243}=A_{244}=3$. The adjacency matrix of the sub-graph is an all-one matrix in $\mathbb{R}^{n\times n}$. In this way, the hypergraph is a signless network. Figure \ref{fig:consensus2} shows that all agents finally converge to an identical value from a positive random initial condition. 


Thirdly, we consider the case in line with Proposition \ref{thm:bipartite}. Let $\sigma=(1,1,-1,-1)^\top$. We set the adjacency tensor $A$ of order $4$ dimension $4$ such that $\sgn(A_{i_1,i_2,i_3,i_4})=\sigma_{i_1}\sigma_{i_2}\sigma_{i_3}\sigma_{i_4}$. It corresponds to a $4$-uniform hypergraph with $4$ nodes. Figure \ref{fig:biconsensus} shows that all agents finally reach bipartite consensus from a random initial condition satisfying $\Dg(\sigma)x\in \mathbb{R}^n_{++}$. The simulation result is consistent with Proposition \ref{thm:bipartite}.


Then, we consider the case in line with Theorem \ref{cor:3}. We consider a non-uniform hypergraph, which contains a $4$-uniform sub-hypergraph and a sub-graph. Let $\sigma=(1,1,-1,-1)^\top$. We set the adjacency tensor $A$ of order $4$ dimension $4$ such that $\sgn(A_{i_1,i_2,i_3,i_4})=\sigma_{i_1}\sigma_{i_2}\sigma_{i_3}\sigma_{i_4}$. It corresponds to a $4$-uniform sub-hypergraph with $4$ nodes. Furthermore, we set the adjacency matrix $B$ as $\sgn(B_{i_1,i_2})=\sigma_{i_1}\sigma_{i_2}$, which represent a sub-graph with $4$ nodes. Figure \ref{fig:biconsensus2} shows that all agents finally reach bipartite consensus from a random initial condition satisfying $\Dg(\sigma)x\in \mathbb{R}^n_{++}$. The simulation result is consistent with Theorem \ref{cor:3}.

Finally, we consider the non-polynomial case in line with Corollary \ref{cor:np}. We consider a non-uniform hypergraph, which contains a $4$-uniform sub-hypergraph and a sub-graph. Let $\sigma=(1,1,-1,1)^\top$. We set the adjacency tensor $A$ of order $4$ dimension $4$ such that $\sgn(A_{i_1,i_2,i_3,i_4})=\sigma_{i_1}\sigma_{i_2}\sigma_{i_3}\sigma_{i_4}$. It corresponds to a $4$-uniform sub-hypergraph with $4$ nodes. Furthermore, we set the adjacency matrix $B$ as $\sgn(B_{i_1,i_2})=\sigma_{i_1}\sigma_{i_2}$, which represents a sub-graph with $4$ nodes. We choose a non-polynomial interaction function as $f(x_i)=\arctan(x_i)$. Figure \ref{fig:biconsensusnp} shows that all agents finally reach  bipartite consensus from a random initial condition satisfying $\Dg(\sigma)x\in \mathbb{R}^n_{++}$. The simulation result is consistent with Corollary \ref{cor:np}.

\begin{figure*}
\begin{subfigure}[t]{0.24\linewidth}
    \centering
\includegraphics[height=3.5cm]{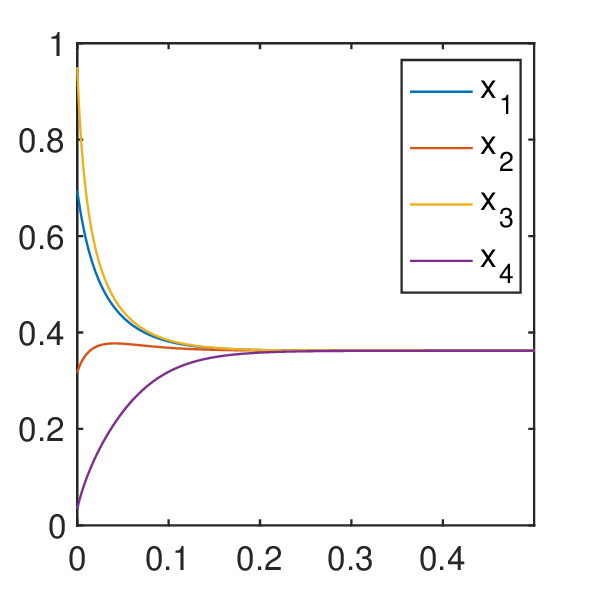}
\captionsetup{width=.95\textwidth}
    \caption{}
    \label{fig:consensus}
  \end{subfigure}
  \begin{subfigure}[t]{0.24\linewidth}
    \centering
\includegraphics[height=3.5cm]{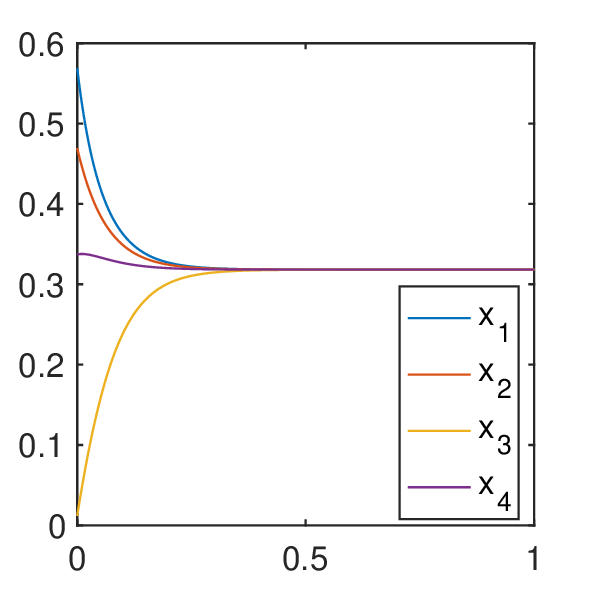}
\captionsetup{width=.95\textwidth}
    \caption{}
    \label{fig:consensus2}
  \end{subfigure}
  \begin{subfigure}[t]{0.24\linewidth}
    \centering
\includegraphics[height=3.5cm]{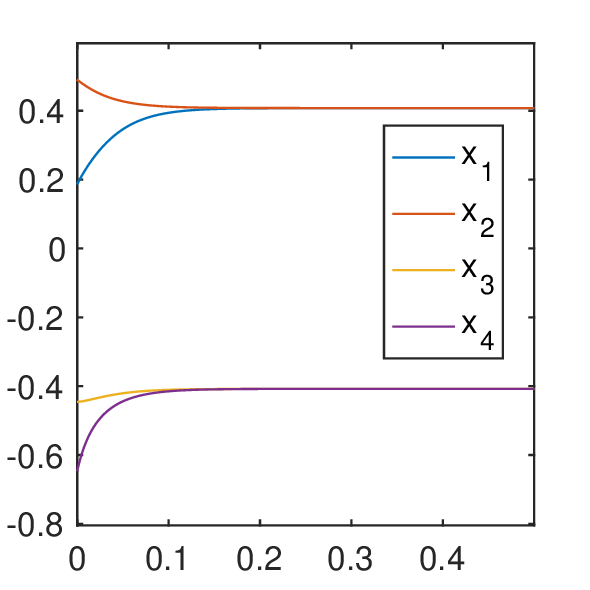}
\captionsetup{width=.95\textwidth}
    \caption{}
    \label{fig:biconsensus}
\end{subfigure}
\begin{subfigure}[t]{0.24\linewidth}
    \centering
\includegraphics[height=3.5cm]{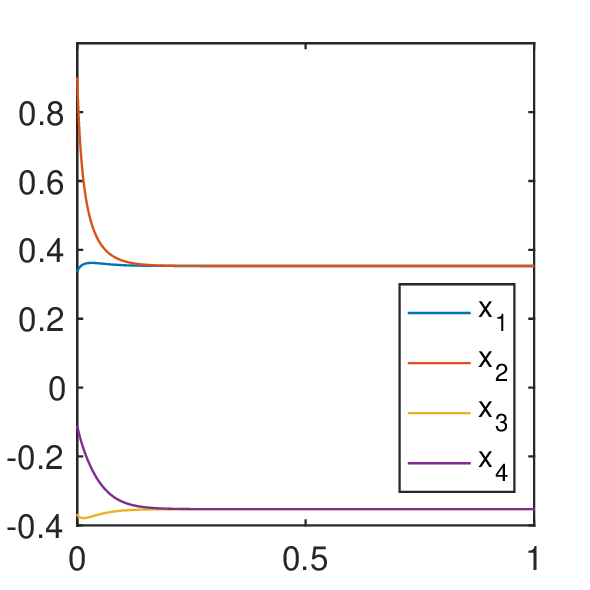}
\captionsetup{width=.95\textwidth}
    \caption{}
    \label{fig:biconsensus2}
\end{subfigure}
\caption{ The horizontal axis is the time $t$.
(a) The consensus example on a $4$-uniform hypergraph.
(b) The consensus example on a non-uniform hypergraph containing both a $3$-uniform sub-hypergraph and a sub-graph.
(c) The bipartite consensus example on a $4$-uniform hypergraph. (d)  The bipartite consensus example on a non-uniform hypergraph containing both a $4$-uniform sub-hypergraph and a sub-graph. 
}
\end{figure*}

\begin{figure}
        \centering
        \includegraphics[height=4cm]{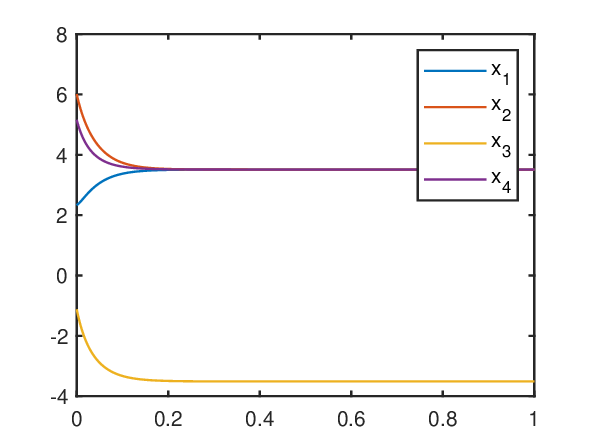}
        \caption{The bipartite consensus example of a non-polynomial interaction on a non-uniform hypergraph containing both a $4$-uniform sub-hypergraph and a sub-graph. The horizontal axis is the time $t$. }
       \label{fig:biconsensusnp}
\end{figure}

\section{Conclusions}

Here, we summarize the paper as follows. In this paper, we consider collective behavior and design convergence dynamics on a signless hypergraph by exploiting a Metzler-tensor-based homogeneous polynomial system. The proof techniques are novel, which include the Perron–Frobenius Theorem of an irreducible Metzler tensor and an extended invariance principle. We further give a formal definition of a Laplacian tensor of a signless directed uniform hypergraph. The result reveals that the collective behavior on a hypergraph is generally similar to that on a conventional graph in a positive orthant, while the behavior may be different outside the positive orthant. 
 Then, we consider the case on a signed hypergraph. We give a formal definition of a Laplacian tensor of a signed directed uniform hypergraph. We further extend the concept of structural balance on a hypergraph. With the help of a gauge transformation and similarities of tensors, we achieve a result of bipartite consensus. We finally find out that the collective behavior on a hypergraph is still similar to that on a conventional graph in a desired orthant determined by the balanced structure of the hypergraph, while the behavior may be different outside the orthant. All these analytical results will help us to understand the collective behavior on a higher-order network with cooperative and antagonistic interactions. In addition, we also take the non-polynomial interaction function into account and obtain some similar results with the polynomial case. Finally, we use some numerical examples to highlight our contributions.

 As potential future works, we may further study the system behavior of an unbalanced case on a signed hypergraph. This may require one to consider and utilize further tensor properties (not necessarily for a Metzler tensor) to go on the study.

%
\bibliographystyle{IEEEtran}
\bibliography{bib}

\begin{appendix}

\section{Extended invariance principle}
Here, we introduce an extended invariance principle \cite{alvarez2000invariance}. We consider discontinuous dynamic systems governed by differential equations of the form
$$
\dot{x}=f(x),
$$
where $f: \mathbb{R}^n \rightarrow \mathbb{R}^n$ is a piecewise continuous function that undergoes discontinuities on a set $N$ of measure zero. 

\begin{lemma}[\cite{alvarez2000invariance}]\label{lem:inv}
    Suppose that there exists a positive definite, Lipschitz-continuous function $V(x)$ such that
\begin{equation}\label{eq:cond}
\frac{d}{d t} V(x(t))=\left.\frac{d}{d h} V(x(t)+h \dot{x}(t))\right|_{h=0} \leqslant 0
\end{equation}
almost everywhere. Let $M$ be the largest invariant subset of the manifold where the strict equality (5) holds, and let $V(x) \rightarrow \infty$ as $\operatorname{dist}(x, M) \rightarrow \infty$. Then all the trajectories $x(t)$ of $(1)$ converge to $M$, that is,
$$
\lim _{t \rightarrow \infty} \operatorname{dist}(x(t), M)=0.
$$\end{lemma}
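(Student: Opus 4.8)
The plan is to adapt the classical LaSalle invariance principle to the present discontinuous setting, the genuinely new features being the measure-zero discontinuity set $N$ of $f$ and the fact that \eqref{eq:cond} is assumed only almost everywhere. Throughout I treat $x(\cdot)$ as an absolutely continuous (Carath\'eodory) solution of $\dot x = f(x)$ defined for all $t\geq 0$. First I would establish monotonicity of $V$ along the trajectory. Since $V$ is Lipschitz continuous and $x(\cdot)$ is absolutely continuous, the composition $t\mapsto V(x(t))$ is absolutely continuous; by Rademacher's theorem $V$ is differentiable almost everywhere, and at each time where both $V$ is differentiable at $x(t)$ and $x$ is differentiable at $t$ the chain rule gives $\frac{d}{dt}V(x(t)) = \nabla V(x(t))\cdot\dot x(t)$, which is exactly the directional derivative appearing in \eqref{eq:cond}. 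The exceptional times form a measure-zero set, so, since an absolutely continuous function is the integral of its almost-everywhere derivative, the hypothesis $\frac{d}{dt}V(x(t))\leq 0$ a.e. forces $t\mapsto V(x(t))$ to be non-increasing. As $V$ is positive definite, hence bounded below by $0$, the limit $c:=\lim_{t\to\infty}V(x(t))\geq 0$ exists.

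Next I would pass to the $\omega$-limit set. The radial condition $V(x)\to\infty$ as $\operatorname{dist}(x,M)\to\infty$, combined with $V(x(t))\leq V(x(0))$, confines the trajectory to a region of bounded distance from $M$, from which I would deduce that the forward orbit is bounded; its $\omega$-limit set $\Omega$ is therefore nonempty and compact. For any $p\in\Omega$ pick $t_n\to\infty$ with $x(t_n)\to p$; continuity of $V$ yields $V(p)=\lim_n V(x(t_n))=c$, so $V\equiv c$ is constant on $\Omega$.

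I would then identify $\Omega$ as a subset of $M$ and conclude. The key structural fact is that $\Omega$ is invariant. Granting this, any solution lying entirely in $\Omega$ has $V$ constant equal to $c$, so its derivative \eqref{eq:cond} vanishes (almost everywhere along it); hence $\Omega$ is contained in the set where equality holds in \eqref{eq:cond}, and being invariant it is contained in the largest invariant subset $M$ of that set. Since a bounded trajectory converges to its $\omega$-limit set, $\operatorname{dist}(x(t),\Omega)\to 0$, and $\Omega\subseteq M$ then gives $\operatorname{dist}(x(t),M)\to 0$, which is the assertion.

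The main obstacle is precisely the invariance of $\Omega$ in the discontinuous regime. For continuous $f$ with unique solutions this is textbook: one translates the trajectory in time and uses continuous dependence to show that the limit of $x(\cdot+t_n)$ is again a solution through $p$ that remains in $\Omega$. Here $f$ is only piecewise continuous and solutions need not be unique, so this argument must be replaced by a closure property of the Carath\'eodory solution set under uniform-on-compacta limits of time-shifted solution segments. Establishing that such limits are again solutions despite the discontinuities — exploiting that $N$ has measure zero so that the defining integral identity $x(t)=x(s)+\int_s^t f(x(\tau))\,d\tau$ passes to the limit — is the technical heart of the proof. Once invariance of $\Omega$ is secured, the remaining steps are routine LaSalle-type bookkeeping.
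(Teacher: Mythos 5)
You should first note a structural point: the paper does not prove this lemma at all --- it is quoted verbatim from \cite{alvarez2000invariance} as an imported tool (see the appendix, where it is stated with a citation and followed only by a companion lemma on points where $\nabla V$ fails to exist). So there is no in-paper proof to match, and your attempt must stand on its own. It does not, for two concrete reasons. First, your boundedness step is wrong: from $V(x(t))\leq V(x(0))$ and $V(x)\to\infty$ as $\operatorname{dist}(x,M)\to\infty$ you may only conclude that $\operatorname{dist}(x(t),M)$ stays bounded, and since $M$ need not be a bounded set this does not confine the forward orbit to a compact set. This is not a hypothetical loophole: in the very application of this lemma in the paper (Proposition \ref{thm:centermanifold}), $M$ contains the unbounded ray $\{\alpha x^*:\alpha\geq 0\}$ of equilibria. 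With an unbounded orbit the $\omega$-limit set $\Omega$ may be empty or noncompact, and your entire chain (nonempty compact $\Omega$, $V\equiv c$ on $\Omega$, $\Omega\subseteq M$, $x(t)\to\Omega$) collapses. A correct proof has to argue on $\operatorname{dist}(x(t),M)$ directly, e.g.\ by contradiction using sublevel sets of $V$ relative to $M$, rather than through compact $\omega$-limit sets.

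Second, the step you yourself flag as ``the technical heart'' --- invariance of $\Omega$ via closure of the Carath\'eodory solution set under uniform limits of time shifts --- is not merely unestablished; the sketched mechanism fails in general for discontinuous $f$. The fact that the discontinuity set $N$ has measure zero \emph{in state space} does not prevent the limit trajectory from dwelling in $N$ for a \emph{positive duration of time} (sliding motion, exactly the Coulomb-friction phenomenon motivating \cite{alvarez2000invariance}); on such a time set $f(x_n(\tau))$ need not converge to $f(x(\tau))$, the integral identity $x(t)=x(s)+\int_s^t f(x(\tau))\,d\tau$ does not pass to the limit, and uniform limits of solutions are genuinely not solutions --- this is precisely why Filippov/Krasovskii regularizations exist. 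There is also a smaller wrinkle you gloss over: constancy of $V$ along a solution in $\Omega$ gives equality in \eqref{eq:cond} only for almost every time, which does not immediately place every point of $\Omega$ in the \emph{manifold} where equality holds pointwise. In summary, your proposal is a reasonable LaSalle-style plan, but its two load-bearing steps are, respectively, false as stated (orbit boundedness when $M$ is unbounded) and resting on a limit-passing argument that breaks exactly in the discontinuous regime the lemma is designed for.
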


\begin{lemma}[\cite{alvarez2000invariance}]
     Condition \eqref{eq:cond} of Theorem 1 is fulfilled if $\frac{d}{d t} V(x(t))=\left.\frac{d}{d h} V(x(t)+h \dot{x}(t))\right|_{h=0}$ is nonpositive at the points of the set $N_V$ where the gradient $\nabla V$ of the function $V(x)$ does not exist, and in the continuity domains of the function $f(x)$ where (4) is expressed in the standard form
$$
\frac{d}{d t} V(x)=\nabla V(x) \cdot f(x), \quad x \in \mathbb{R}^n \backslash\left(N \cup N_V\right)
$$
\end{lemma}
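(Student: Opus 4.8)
The plan is to establish condition \eqref{eq:cond} — that the generalized derivative $\frac{d}{dt}V(x(t)) = \left.\frac{d}{dh}V(x(t)+h\dot x(t))\right|_{h=0}$ is nonpositive for almost every $t$ — by partitioning the state space according to the two sources of irregularity: the discontinuity set $N$ of $f$ and the set $N_V$ on which $\nabla V$ fails to exist. First I would record the two null-set facts that make the hypothesis of the lemma meaningful. Since $V$ is Lipschitz, Rademacher's theorem gives that $N_V$ has Lebesgue measure zero; together with the standing assumption $\mu(N)=0$, the union $N\cup N_V$ is null and its complement $\mathbb{R}^n\setminus(N\cup N_V)$ has full measure. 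Because any trajectory $x(t)$ is absolutely continuous, $\dot x(t)$ exists for a.e. $t$, and the one-sided directional derivative appearing in \eqref{eq:cond} is well defined for a.e. $t$ thanks to the Lipschitz regularity of $V$.

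Next I would dispose of the two cases that the hypothesis addresses directly. At times $t$ with $x(t)\in N_V$, the derivative in \eqref{eq:cond} is kept in its primitive directional form $\left.\frac{d}{dh}V(x(t)+h\dot x(t))\right|_{h=0}$, which is nonpositive by the first assumption of the lemma. At times $t$ with $x(t)\notin N_V$ the gradient $\nabla V(x(t))$ exists, so the directional derivative collapses to $\nabla V(x(t))\cdot\dot x(t)$; and at those of these times where additionally $x(t)\notin N$ we have $\dot x(t)=f(x(t))$, so the expression becomes the standard form $\nabla V(x(t))\cdot f(x(t))$, nonpositive by the second assumption. Hence \eqref{eq:cond} already holds at every time whose trajectory value lies in $N_V$ or in the continuity domain $\mathbb{R}^n\setminus(N\cup N_V)$.

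The remaining set of times, $\{t:x(t)\in N\setminus N_V\}$, is where the argument becomes genuinely delicate, and it cannot be discarded cheaply: although $N$ is null in state space, the preimage $\{t:x(t)\in N\}$ may carry positive measure (sliding along a discontinuity surface), so a pointwise full-measure condition in $\mathbb{R}^n$ does not transfer to almost every $t$ for free. Here I would invoke the solution concept under which the discontinuous system is read. For a Filippov solution the velocity obeys $\dot x(t)\in\overline{\mathrm{co}}\,\{\lim_j f(y_j):y_j\to x(t),\ y_j\notin N\}$, a convex combination of limiting field values approached from the continuity domains. Writing $\frac{d}{dt}V(x(t))=\nabla V(x(t))\cdot\dot x(t)$ and pushing the sign condition of the second assumption to the boundary through these limits — using $\nabla V(x(t))\cdot f(y)\le 0$ for $y$ near $x(t)$ in the continuity domain, together with the upper-semicontinuity of the Filippov map and the limiting behaviour of $\nabla V$ — yields nonpositivity on this set as well. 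Since the three cases exhaust a full-measure set of times, \eqref{eq:cond} follows.

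I expect this last step to be the main obstacle. On the part of the orbit that rides the discontinuity set $N$, the naive chain-rule expression $\nabla V\cdot f$ does not apply verbatim, and one must transfer the sign information from the continuity domains through the convexification inherent in the solution concept; the fact that the lemma phrases its second hypothesis as a pointwise inequality on the full-measure set $\mathbb{R}^n\setminus(N\cup N_V)$, rather than on all of $\mathbb{R}^n$, is precisely what makes this transfer the crux, and closing the limiting argument without assuming continuity of $\nabla V$ is the subtle point.
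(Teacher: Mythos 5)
The paper contains no proof for you to be compared against: this lemma is imported verbatim from \cite{alvarez2000invariance} and stated in the appendix without argument, so your attempt has to stand on its own. Your first two cases do stand: at times with $x(t)\in N_V$ the claim is literally the first hypothesis, and at times with $x(t)\notin N\cup N_V$, Fr\'echet differentiability of $V$ at $x(t)$ (available a.e.\ by Rademacher) together with $\dot x(t)=f(x(t))$ collapses the directional derivative to $\nabla V(x(t))\cdot f(x(t))\le 0$. You are also right that the sliding set $\{t: x(t)\in N\setminus N_V\}$ may carry positive time measure and is the crux; that diagnosis is correct and well explained.

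But the limiting argument you sketch for that third case does not close, and the step you gesture at is exactly the one that fails. The hypothesis supplies $\nabla V(y)\cdot f(y)\le 0$ with the gradient evaluated at the \emph{same} point $y$; your argument invokes ``$\nabla V(x(t))\cdot f(y)\le 0$ for $y$ near $x(t)$ in the continuity domain,'' with the gradient frozen at $x(t)$, which is not among the hypotheses. To pass from one to the other along $y_j\to x(t)$, $y_j\notin N\cup N_V$, $f(y_j)\to v$, you would need $\nabla V(y_j)\to\nabla V(x(t))$, i.e.\ continuity of the gradient at $x(t)$ --- and Lipschitz continuity gives no such thing. The sequence $\nabla V(y_j)$ is bounded, so subsequences converge, but their limits are only guaranteed to lie in the Clarke subdifferential $\partial V(x(t))$, which need not reduce to $\{\nabla V(x(t))\}$ even when the gradient exists at $x(t)$ (pointwise differentiability does not imply strict or continuous differentiability: think of $x^2\sin(1/x)$-type behavior). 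All you can extract is $g\cdot v\le 0$ for \emph{some} $g\in\partial V(x(t))$ depending on $v$, and the Filippov convexification step then fails too, since different extreme velocities may pair with different subgradients; none of this yields the required $\nabla V(x(t))\cdot\dot x(t)\le 0$. Closing the gap needs a structural input your proof never isolates: for instance $V$ piecewise $C^1$, so that $\nabla V$ extends continuously to the closure of each continuity region and your limit goes through; or Clarke regularity of $V$ with the set-valued derivative machinery (Bacciotti--Ceragioli); or strengthening the second hypothesis to all of $\partial V$. It is worth noting that in this paper's application the Lyapunov function $V_m=\max_i\left(x_i/x_i^*\right)^{k-1}$ is a pointwise maximum of smooth functions, hence Clarke regular and piecewise smooth, so the lemma is used in a regime where such a repair is available --- but as a proof of the lemma as stated, your attempt has a genuine hole, which you candidly flag (``the subtle point'') but do not fill, and flagging is not closing.
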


Although the Theorem is designed for a discontinuous dynamical system, we can still use it for a continuous dynamical system if we want to adopt a not continuously differentiable Lyapunov function.
\end{appendix}
\end{document}